\theoremstyle{plain}
\newtheorem{theo}{Theorem}
\newtheorem{lemma}[theo]{Lemma}
\newtheorem{pre}[theo]{Proposition}
\newtheorem{cl}[theo]{Claim}
\theoremstyle{definition}
\newtheorem{defi}[theo]{Definition}
\newcommand{\prob}[1]{\mathrm{Pr}\left[#1\right]}
\def \Saarbrucken {{Saarbr\"{u}cken}}
\def \Matousek {{Matou\v{s}ek}}
\def \calbddsets {{\mathcal{P}_l}}
\def \calh {{\mathcal{H}}}
\title{
Size sensitive packing number for Hamming cube\\
and its consequences
}
\author{
Kunal Dutta
\footnote{
D1: Algorithms \& Complexity, 
MPI for Informatics,
\Saarbrucken, Germany
}
\footnote{
Supported by the Indo-German Max Planck Center for Computer Science (IMPECS).
}
\and
Arijit Ghosh
\footnotemark[1]
\footnotemark[2]
}
\begin{document}

\maketitle

\begin{abstract}
We prove a size-sensitive version of Haussler's Packing 
lemma~\cite{Haussler92spherepacking} for set-systems with bounded
primal 
shatter dimension, which have an additional {\em size-sensitive
property}. This answers a 
question asked by Ezra~\cite{Ezra-sizesendisc-soda-14}. We also partially 
address another point raised by Ezra regarding overcounting of sets in her
chaining procedure. As a
consequence of these improvements, 
we get an improvement on the size-sensitive
discrepancy bounds for set systems with
the above property. Improved bounds on the discrepancy for these
special set systems also 
imply an improvement in the sizes of {\em relative 
$(\varepsilon, \delta)$-approximations} and 
$(\nu, \alpha)$-samples.\footnote{At the time of submission, we have 
become aware of a similar packing result proven simultaneously by Ezra. 
However, 
we note that our proof of the main packing lemma is quite different from 
hers. Also, the focus of our paper is on discrepancy bounds and sampling 
complexity.}
 \end{abstract}

\section{Introduction}
\label{sec-introduction}


A \emph{set system} or \emph{range space} $(X,\mathcal{S})$ is a ground set $X$ 
and a collection $\mathcal{S} \subset 2^X$ of subsets of $X$, called \emph{ranges}.

In this paper we are interested in set systems that have 
{\em bounded primal shatter dimension}.
So, let's begin 
by recalling the definition of {\em primal shatter function} which plays an important 
role in this paper:
\begin{defi}[Primal shatter function; see~\cite{matousek-geomdisc-2009}]
  The primal shatter function of a set system $(X, \mathcal{S})$ is defined as
  $$
  \pi_{\mathcal{S}}(m) = \max_{Y \subset X, \, |Y| = m} |\mathcal{S}|_{Y}|
  $$
  where $\mathcal{S}|_{Y} = \{ S\cap Y : \; S \in \mathcal{S}\}$\footnote{
    Note that for the rest of this paper we will call $\mathcal{S}|_{Y}$ the 
    {\em projection} of $Y$ onto $\mathcal{S}$.}. 
\end{defi}

A set system $(X,\mathcal{S})$ with $|X| = n$ has a 
{\em primal shatter dimension} $d$ if for all $m \leq n$
$\pi_{\mathcal{S}}(m) = O(m^{d})$. \footnote{ Strictly speaking, the primal 
shatter dimension is defined over a family $\mathcal{F}=\{\mathcal{F}_i\}_{i=1}^\infty$ of 
set systems, where for each $i$, $\mathcal{F}_i$ is a sub-family of set systems whose 
ground set has exactly $i$ elements. The constant of proportionality is common for all 
members of $\mathcal{F}$.} 

Even though it is natural to consider {\em VC-dimension} of set systems 
that arise in geometric settings, but a set system $(X,\mathcal{S})$ with VC-dimension $d$
also implies that the primal shatter dimension of $(X,\mathcal{S})$ is
$d$, 
see~\cite{matousek-geomdisc-2009}.
From this point onward we will be looking at set systems that have bounded 
primal shatter dimension.

For a set system $(X,\mathcal{S})$, a subset $\mathcal{P} \subseteq \mathcal{S}$ 
is $\delta$-separated if for all 
$S_{1}, \, S_{2} \, (\neq S_{1}) \in \mathcal{P}$ we have more than $\delta$ elements in
the set $S_{1} \Delta S_{2} = \left( S_{1} \setminus S_{2} \right) \cup \left( S_{2} \setminus S_{1}\right)$, i.e., 
{\em symmetric difference distance} $|S_{1} \Delta S_{2}|$
between $S_{1}$ and $S_{2}$ is strictly greater than $\delta$. 
A $\delta$-{\em packing} for $(X,\mathcal{S})$ is inclusion-maximal $\delta$-separated subsets of 
$\mathcal{S}$.

Let $X = [n] = \{1, \, \dots, \, n \}$ be the ground set and
$\mathcal{S}$ be a subset of $2^{X}$. Then it is natural to associate
sets $S \in \mathcal{S}$ with the $n$-dimensional {\em Hamming
  cube}, where $S$ will be mapped to the vertex $v_{S}$ whose $i$-th coordinate 
is ``one'' if $i \in S$, otherwise it is ``zero'', i.e., $v_{S}$ is
the indicator vector for the set $S$.
In this setting symmetric difference distance $|S_{1}\Delta S_{2}|$
between two set $S_{1}$ and $S_{2}$ becomes equal to the {\em Hamming 
distance} between the two vertices $v_{S_{1}}$ and $v_{S_{2}}$ in the 
Hamming cube. Therefore the problem of finding $\delta$-packing boils 
down to finding inclusion-maximal set of vertices such that for any two 
vertices the Hamming distance is greater than $\delta$.

The problem of bounding the size of a $\delta$-packing
of a set system has been an important question.
In a breakthrough paper~\cite{Haussler92spherepacking},
Haussler proved an asymptotically tight bound on the size of the largest $\delta$-packing of 
a set systems with bounded primal shatter dimension:
\begin{theo}[Haussler's packing lemma~\cite{Haussler92spherepacking, matousek-geomdisc-2009}]
  Let $d> 1$ and $C$ be constants, and let $(X, \mathcal{S})$ be a set system with $|X| = n$
  and whose primal shatter function satisfies $\pi_{\mathcal{S}}(m) \leq Cm^{d}$ for all $1\leq m \leq m$,
  i.e., primal shatter dimension $d$.  
  If $\delta$ be an integer, $1\leq \delta \leq n$, and let $\mathcal{P} \subseteq \mathcal{S}$
  be $\delta$-packed then 
  $$
  |\mathcal{P}| = O((n/\delta)^{d}).
  $$
    Note that the constant in big-$O$ depends only on $d$ and $C$. 
\label{thm-Haussler-packing-lemma}
\end{theo}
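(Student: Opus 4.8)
The plan is to use the probabilistic method: draw a random sub-sample $A\subseteq X$ of size $s$ roughly $n/\delta$, and compare two estimates of $\mathbb{E}\bigl[\,|\mathcal{P}|_{A}|\,\bigr]$, the expected number of distinct projections. On one side, since $\mathcal{P}\subseteq\mathcal{S}$, for \emph{every} choice of $A$ with $|A|=s$ we have $|\mathcal{P}|_{A}|\le|\mathcal{S}|_{A}|\le\pi_{\mathcal{S}}(s)\le Cs^{d}$, so the expectation is $O((n/\delta)^{d})$ as soon as $s=O(n/\delta)$. On the other side, I would argue that the $\delta$-separation of $\mathcal{P}$ already forces $\mathbb{E}\bigl[\,|\mathcal{P}|_{A}|\,\bigr]\ge|\mathcal{P}|/2$ for such a small $s$; combining the two bounds yields $|\mathcal{P}|=O((n/\delta)^{d})$.

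The crux is the lower bound. The naive attempt — ``two $\delta$-separated sets collide under a random $s$-sample with probability at most $(1-\delta/n)^{s}\le e^{-s\delta/n}$, so $s\approx(n/\delta)\log|\mathcal{P}|$ makes all projection classes singletons'' — only gives the bound up to polylogarithmic factors in $n/\delta$ and is wasteful. Instead I would reveal $A=\{x_{1},\dots,x_{s}\}$ \emph{one element at a time} (a uniform random ordering of $X$), set $A_{i}=\{x_{1},\dots,x_{i}\}$, and track $k_{i}:=|\mathcal{P}|_{A_{i}}|$, the number of projection classes after $i$ steps, with $k_{0}=1$. Adding $x_{i+1}$ can only split a class into two pieces, and it splits a class $C$ with $|C|\ge 2$ exactly when $x_{i+1}\in S\triangle S'$ for some $S,S'\in C$; since the members of $C$ agree on $A_{i}$ and are pairwise more than $\delta$ apart, already a single such pair is separated with probability at least $\delta/(n-i)\ge\delta/n$. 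Hence, writing $b_{i}$ for the number of non-singleton classes at stage $i$,
\[
\mathbb{E}[\,k_{i+1}-k_{i}\mid A_{i}\,]\;=\;\sum_{C\,:\,|C|\ge 2}\Pr[\,x_{i+1}\text{ splits }C\,]\;\ge\;\frac{\delta}{n}\,b_{i},
\]
and since singleton classes contribute $1$ and the rest at least $2$ to $|\mathcal{P}|=\sum_{C}|C|$, we get $b_{i}\ge|\mathcal{P}|-k_{i}$. Substituting and taking full expectations yields the linear recurrence $\mathbb{E}[k_{i+1}]\ge(1-\delta/n)\,\mathbb{E}[k_{i}]+(\delta/n)\,|\mathcal{P}|$, i.e. $|\mathcal{P}|-\mathbb{E}[k_{i}]\le(1-\delta/n)^{i}\,|\mathcal{P}|\le e^{-i\delta/n}\,|\mathcal{P}|$. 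Thus $\mathbb{E}[k_{s}]\ge|\mathcal{P}|/2$ as soon as $s\ge(n/\delta)\ln 2$, with \emph{no} dependence on $|\mathcal{P}|$.

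Taking $s=\lceil(n/\delta)\ln 2\rceil$ (which is at most $n$ and at most $2n/\delta$), the two estimates give $|\mathcal{P}|/2\le\mathbb{E}[k_{s}]\le Cs^{d}\le C(2n/\delta)^{d}$, so $|\mathcal{P}|=O((n/\delta)^{d})$ with the constant depending only on $d$ and $C$. The step I expect to be the main obstacle is precisely the one that makes the bound \emph{tight}: arranging the incremental analysis so that $k_{i}$ is shown to approach $|\mathcal{P}|$ at the geometric rate $1-\delta/n$ after only $\Theta(n/\delta)$ steps, rather than paying a logarithmic factor as in one-shot sampling. The quantitative heart is the per-step estimate ``a non-singleton class splits with probability $\ge\delta/n$'', which uses nothing beyond pairwise $\delta$-separation on the not-yet-revealed coordinates; verifying that this, together with $b_{i}\ge|\mathcal{P}|-k_{i}$, closes into a clean recurrence is where the real work lies. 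The remaining points — edge cases when $\delta$ is close to $n$, and that $\pi_{\mathcal{S}}$ bounds $|\mathcal{P}|_{A}|$ because $\mathcal{P}\subseteq\mathcal{S}$ — are routine.
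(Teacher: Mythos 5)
Your plan reduces the theorem to the intermediate claim that a uniformly random $s$-subset $A$ with $s=\Theta(n/\delta)$ satisfies $\mathbb{E}\bigl[|\mathcal{P}|_{A}|\bigr]\ge|\mathcal{P}|/2$, and you try to prove this using \emph{only} the pairwise $\delta$-separation of $\mathcal{P}$, with the shatter dimension entering only in the upper bound $|\mathcal{P}|_{A}|\le Cs^{d}$. This cannot work: any binary code of minimum distance $>\delta$ is a $\delta$-separated family, and for, say, $\delta=n/10$ there are codes with $2^{\Omega(n)}$ codewords (Gilbert--Varshamov), while a sample of $s=O(1)\cdot(n/\delta)=O(1)$ elements admits at most $2^{s}=O(1)$ distinct projections. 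So $\mathbb{E}\bigl[|\mathcal{P}|_{A}|\bigr]$ can be exponentially smaller than $|\mathcal{P}|/2$, and your intermediate claim is false in general. It only \emph{becomes} true under the shatter-dimension hypothesis — which is precisely why the genuine proof must feed that hypothesis into the lower-bound side of the argument as well, not just the trivial upper bound.

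The specific place your derivation breaks is the step ``$b_{i}\ge|\mathcal{P}|-k_{i}$.'' With $k_{i}$ classes, of which $b_{i}$ are non-singletons, one has
\[
|\mathcal{P}|-k_{i}\;=\;\sum_{C}\bigl(|C|-1\bigr)\;=\;\sum_{|C|\ge 2}\bigl(|C|-1\bigr)\;\ge\;b_{i},
\]
so the inequality actually runs the \emph{other} way, $b_{i}\le|\mathcal{P}|-k_{i}$, and the linear recurrence $\mathbb{E}[k_{i+1}]\ge(1-\delta/n)\mathbb{E}[k_{i}]+(\delta/n)|\mathcal{P}|$ does not follow. Indeed, if all of $\mathcal{P}$ sits in one huge class then $b_{i}=1$ while $|\mathcal{P}|-k_{i}=|\mathcal{P}|-1$: a single class can split at most once per revealed coordinate, so your per-step progress is bottlenecked at $O(\delta/n)$ rather than proportional to the remaining deficit. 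Patching this by tracking pairs $\sum_{C}\binom{|C|}{2}$ gives the geometric decay you want, but now it takes $s=\Theta((n/\delta)\log|\mathcal{P}|)$ steps to drive the pair count below $1$, which is exactly the lossy one-shot bound you were trying to beat. The actual Haussler/Chazelle argument (followed in this paper's Section~3 for the size-sensitive version) avoids this by never trying to make $|\mathcal{P}|_{A}|$ large; instead it equips the projection classes with weights $w(Q)=\#\{S\in\mathcal{P}:S\cap A=Q\}$, bounds the total edge weight of the unit-distance graph on $\mathcal{P}|_{A}$ from above by $2d\,|\mathcal{P}|$ using the fact that a bounded-shatter-dimension set system has unit-distance graphs with at most $d\cdot(\text{\#vertices})$ edges, and then lower-bounds the expected edge weight by examining the effect of deleting one random element of $A$. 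Both halves of that comparison use the shatter-dimension hypothesis, which is what lets the log factor be absorbed. I'd suggest abandoning the ``separate $\ge$ half the classes'' target and instead reworking the lower bound around a weighted count over projection classes, as in Chazelle's proof.
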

\Matousek~\cite{matousek-geomdisc-2009} remarked that Haussler's proof of the packing lemma
uses a ``probabilistic argument which looks like a magician's trick''.
Haussler actually proved the result for set systems with bounded
VC-dimension, but it was verified by 
Wernisch~\cite{Wernisch-packing-92} to also work for set
systems with bounded primal shatter dimension. 
For a proof of Theorem~\ref{thm-Haussler-packing-lemma} refer to 
\Matousek's book on Geometric Discrepancy~\cite{matousek-geomdisc-2009}, where
\Matousek~followed Chazelle's~\cite{Chazelle-packing-92} simplified proof of the above theorem.


Ezra~\cite{Ezra-sizesendisc-soda-14} 
refined the definition of primal shatter dimension to make it size sensitive. 
Specifically, for any $Y \subseteq X$ with $|Y| = m$, where $1 \leq m \leq n$,
and for any parameter $1 \leq k \leq m$, the number of sets of size at most $k$ 
in the projection 
$\mathcal{S}|_{Y}$
of $Y$ onto $\mathcal{S}$ is $Cm^{d_{1}}k^{d-d_{1}}$, where $C$ is a constant,
$d$ is the primal 
shatter dimension and $1 \leq d_{1} \leq d$.\footnote{Similar to the primal shatter 
dimension, we mention the caveat that the  size-sensitive shattering constants are 
defined for a family of set systems, where $n$ and $k$ both go to infinity, independently 
of each other.} This is a generalisation of primal 
shatter function, and for the rest of this paper we will call $d_{1}$ and $d_{2} = d-d_{1}$
size-sensitive shattering dimensions (or constants). 
Ezra ~\cite{Ezra-sizesendisc-soda-14} 
gave a bound of $O\left(\frac{j^d2^{jd}}{2^{(i-1)d_2}}\right)$, for packings of sets of 
size $O(n/2^{i-1})$ having separation $n/2^j$. 
Ezra further conjectured that the factor of $j^d$ was not essential and 
could be removed. This would have made the
bound optimal up to some constants .

The main contribution of this paper is to get the following 
a size sensitive analog of the Haussler's 
packing result:
\begin{theo}[Size sensitive packing lemma] \label{thm-main-size-sens-bd}
  Let $\delta \in [n]$, and let $\mathcal{P}$ be a $\delta$-separated
  set system having primal 
  shatter dimension $d$, and size-sensitive shattering constants $d_1$ and $d_2$. Let $\mathcal{P}_l$
  be the sets of size $l$ in $\mathcal{P}$. Let $M = M(l) = |\mathcal{P}_l |$. Then
  $$  M \leq c^*\left(\frac{n}{\delta}\right)^{d_1}\left(\frac{l}{\delta}\right)^{d_2}.$$
  where $c^*$ is independent of $n,l,\delta$.
\end{theo}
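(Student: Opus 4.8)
Let me think about how to prove this. Haussler's original proof (via Chazelle's simplification) uses a probabilistic argument: take a random sample, bound the expected number of "edges" in a certain graph (the Delaunay/unit-distance graph on the packing), and use the primal shatter bound to control the number of vertices that survive. The size-sensitive version should follow the same skeleton but track set sizes.

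The plan is to mimic the probabilistic proof of Haussler's packing lemma, but restricting attention to $\mathcal{P}_l$, the sets of size exactly $l$, and exploiting the size-sensitive shatter bound. First I would pick a random subset $A \subseteq X$ by including each element of $X$ independently with probability $p$, where $p$ is chosen of order $\Theta(1/\delta)$ (this is the standard choice — it makes the expected Hamming distance between two $\delta$-separated sets, restricted to $A$, at least a constant). Consider the projection $\mathcal{P}_l|_A$. Two things need to be estimated: (a) an upper bound on $\mathbb{E}[|\mathcal{P}_l|_A|]$, and (b) a lower bound on $\mathbb{E}[|\mathcal{P}_l|_A|]$ in terms of $M = |\mathcal{P}_l|$.

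For (a): a set $S \in \mathcal{P}_l$ has $|S \cap A|$ distributed as $\mathrm{Bin}(l,p)$, so with high probability $|S \cap A| = O(lp) = O(l/\delta)$; more precisely $\mathbb{E}|S\cap A| = lp$. The size-sensitive shatter hypothesis, applied to $Y = A$ (so $m = |A|$, with $\mathbb{E}|A| = np$) and threshold $k = \Theta(lp)$, bounds the number of sets of size at most $k$ in the projection by $C|A|^{d_1}k^{d_2} = O((np)^{d_1}(lp)^{d_2})$. Combining with a concentration/truncation argument to handle the sets whose projected size overshoots $k$ (a Chernoff tail, or a union over dyadic size scales as in Ezra's chaining — here is where the overcounting subtlety she raised enters, and where I would be careful to sum a geometric series rather than pay a $\log$ factor), I get $\mathbb{E}[|\mathcal{P}_l|_A|] = O((np)^{d_1}(lp)^{d_2}) = O((n/\delta)^{d_1}(l/\delta)^{d_2})$, which is exactly the target bound.

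For (b): this is the heart of Haussler's trick and the main obstacle. One must show the projection $\mathcal{P}_l|_A$ does not collapse too much — i.e., $\mathbb{E}[|\mathcal{P}_l|_A|] \geq \Omega(M)$, or at least that if $M$ were much larger than the claimed bound we reach a contradiction. The standard route is the "one element out" / unit-distance-graph argument: form the graph $G$ on vertex set $\mathcal{P}_l$ joining $S,S'$ with an edge when $|S \Delta S'|$ is small (a constant times $\delta$), show via the already-proven projection bound (applied at a slightly larger scale, or via a clean induction on $M$) that $G$ has few edges, hence a large independent-like set, and then argue that distinct sets in $\mathcal{P}_l$ that are "far" in $G$ almost surely project to distinct sets in $A$. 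Concretely I expect to run Haussler's inductive argument: pick a random element $x \in X$, pass to $X \setminus \{x\}$, and relate $|\mathcal{P}_l|$ to the packing numbers of the sub-systems, the base case and the recursion being set up so that the size parameter $l$ is tracked alongside $n$ and $\delta$. The delicate point throughout is that restricting to fixed size $l$ breaks the clean "maximality" of the packing (a subsystem of fixed-size sets need not be a maximal packing), so I would instead bound $|\mathcal{P}_l|$ directly through the sampling inequality $M \leq \mathbb{E}[|\mathcal{P}_l|_A|] / \min_{S}\Pr[S|_A \text{ is "isolated"}]$-type estimate, where the denominator is bounded below by a constant because $p\delta = \Theta(1)$ forces a constant probability that a given set's trace is not shared. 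Putting (a) and (b) together yields $M = O((n/\delta)^{d_1}(l/\delta)^{d_2})$ with the constant depending only on $C, d, d_1$.
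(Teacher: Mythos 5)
Your plan correctly identifies the high-level ingredients — a random sample $A$ with $p = \Theta(1/\delta)$, the size-sensitive shatter bound applied to $Y=A$ with threshold $k = \Theta(lp)$, and a tail bound to control the sets whose trace on $A$ overshoots $k$ — and these are indeed the ingredients the paper uses. However, the core of your argument as sketched would not close, for two reasons.

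First, your item (b) tries to prove $M \le \mathbb{E}\bigl[|\mathcal{P}_l|_A|\bigr]\big/\min_S \Pr[\,S|_A \text{ is isolated}\,]$ with the denominator bounded below by a constant. This fails: to show a fixed $S$ has unique trace on $A$ you must union-bound over all other $S'\in\mathcal{P}_l$, each colliding with probability $\approx (1-p)^{\delta}$; with $p\delta=\Theta(1)$ this probability is a constant, and the union bound over $M$ candidates destroys the lower bound. Patching it by boosting $p$ to $\Theta(\log M/\delta)$ recovers isolation but then $|A|$ and $k$ grow by $\log M$, and the shatter bound becomes $\bigl((n/\delta)\log M\bigr)^{d_1}\bigl((l/\delta)\log M\bigr)^{d_2}$, reintroducing exactly the polylog overhead the theorem is trying to eliminate. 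The quantity the paper actually double-counts is not $|\mathcal{P}_l|_A|$ but the total \emph{edge weight} $W$ of the unit-distance graph $UD(\mathcal{P}_l|_A)$: the upper bound $W \le 2dM$ comes from the degeneracy of $UD$, and the lower bound on $\mathbb{E}[W]$ comes from the ``remove a random element $a\in A$'' trick. You mention the unit-distance graph but do not use it as the load-bearing structure, and ``$\mathbb{E}[|\mathcal{P}_l|_A|]\ge\Omega(M)$'' is neither true nor needed.

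Second, you gloss over the self-referential nature of the tail term. Let $Y$ be the number of sets in $\mathcal{P}_l$ whose intersection with $A$ exceeds $\Theta(l/\delta)$. Chernoff gives $\mathbb{E}[Y] \le M e^{-\Theta(dK)}$ where $p = \Theta(dK/\delta)$ — a bound in terms of $M$, the very quantity being bounded. The paper handles this by conditioning on a ``$Nice$'' event (both $|A|\approx np$ and $Y \le 8\mathbb{E}[Y]$), carrying the $\mathbb{E}[Y]$ term through the edge-weight inequality, and then closing a fixed-point argument: writing $M = g^d(n/\delta)^{d_1}(l/\delta)^{d_2}$ and choosing $K = \max\{1,(\ln g)/36\}$ yields $g \le C\max\{1,\log g\}$, hence $g = O(1)$. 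Your remark about ``a Chernoff tail, or a union over dyadic scales... sum a geometric series'' does not engage with this dependence of the tail on $M$, which is precisely why a na\"{i}ve extension of Haussler's proof (requiring $A$ to behave like a relative approximation with constant probability) breaks down and why the paper needs the extra bookkeeping with $Bad$/$Good$ classes and the parameter $K$.
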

Applying the above bound to Ezra's scenario, we get $O(2^{jd}/2^{(i-1)d_2})$.
Thus, we prove that the extra polylog factors in Ezra's bound can be removed and 
answer her question in the affirmative.



\paragraph*{Combinatorial discrepancy} 

Given a set system $(X, \mathcal{S})$
where $X = [n]$, 
in {\em combinatorial discrepancy} we are interested in 
finding a {\em bi-coloring} $\chi : X \rightarrow \{-1, \, +1\}$ such that 
worst imbalance $\max_{S_{i} \in \mathcal{S}} |\chi(S)|$, where $\chi(S_{i}) = \sum_{j \in S_{i}} \chi(j)$,
in the set system is minimised. The discrepancy of $(X, \mathcal{S})$ is defined as 
$$
\mathrm{disc}(\mathcal{S}) = \min_{\chi} \max_{S\in \mathcal{S}} |\chi(S)|.
$$

Using {\em partial coloring or entropy method} of
Beck~\cite{Beck-discrepancy-integer-sequence-81} 
and an innovative {\em chaining method} 
(originally due to Kolmogorov)
to get a decomposition of sets in the range space,  
Matou$\mathrm{\check{s}}$ek~\cite{Matousek-tight-half-spaces-95,matousek-geomdisc-2009} 
proved an important result for the case 
of set system with bounded primal shatter dimension:

\begin{theo}[\cite{Matousek-tight-half-spaces-95,matousek-geomdisc-2009}]\label{thm-Matousek-discrepancy-primal-shatter} 
  Let $d > 1$ be a constant, and let $(X, \mathcal{S})$ be a set system 
  with $\pi_{m}(\mathcal{S}) \leq C m^{d}$, where $C > 0$ and $d > 1$
  are constants. Then $\mathrm{disc}(\mathcal{S}) = O(n^{1/2 - 1/2d})$,
  where the constant is big-$O$ depends only on $d$ and $C$.
  \label{thm-Matousek-discrepancy-bounded-primal-dimension}
\end{theo}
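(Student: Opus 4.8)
\emph{Proof strategy.} The plan is to prove the bound by the \emph{partial colouring (entropy) method}, using Haussler's packing lemma (Theorem~\ref{thm-Haussler-packing-lemma}) through a chaining decomposition to keep the entropy under control; this is \Matousek's argument. First I would reduce to a one-step statement: it suffices to show that \emph{every} set system $(Y,\mathcal{T})$ with $|Y|=m$ and $\pi_{\mathcal{T}}(k)\le Ck^{d}$ admits a partial colouring $\chi\colon Y\to\{-1,0,+1\}$ that is nonzero on at least $m/2$ points and has $|\chi(T)|\le K\,m^{1/2-1/2d}$ for all $T\in\mathcal{T}$, with $K=K(d,C)$. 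Granting this, apply it to $(X,\mathcal{S})$, freeze the resulting $\pm1$ values, restrict $\mathcal{S}$ to the $\le n/2$ points still coloured $0$ (the shatter bound is inherited by every induced subsystem), and iterate. After $O(\log n)$ rounds all of $X$ is $\pm1$-coloured and the discrepancy accumulated on any fixed $S$ is at most $K\sum_{j\ge0}(n/2^{j})^{1/2-1/2d}$, a convergent geometric series since $d>1$ makes the exponent positive; hence $\mathrm{disc}(\mathcal{S})=O(n^{1/2-1/2d})$.

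For the one-step statement I would invoke Beck's entropy lemma: if one can attach to every $T\in\mathcal{T}$ a width $\Delta_{T}>0$ with $\sum_{T\in\mathcal{T}}g\!\left(\Delta_{T}/\sqrt{|T|}\right)\le m/5$, where $g$ is the usual entropy function (bounded on $[0,\infty)$, growing only logarithmically as its argument tends to $0$, and decaying like $e^{-\lambda^{2}/\mathrm{const}}$ for $\lambda$ large), then there is a partial colouring with at least $m/2$ nonzero entries and $|\chi(T)|\le\Delta_{T}$ for all $T$. This is the standard pigeonhole on the $2^{m}$ colourings bucketed by the vector whose $T$-coordinate is $\chi(T)/\Delta_{T}$ rounded to the nearest integer: the sub-Gaussian tail of $\chi(T)$ for a uniform random colouring makes the ``entropy'' of that coordinate essentially $g(\Delta_{T}/\sqrt{|T|})$, so the number of classes is below $2^{m/2}$, two colourings collide, and half their difference works. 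The whole task is thus reduced to choosing the widths.

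This is the technical core, and here chaining enters. For dyadic scales $\delta_{i}=m/2^{i}$, $i=0,1,\dots,t:=\lceil\log_{2}m\rceil$, fix inclusion-maximal $\delta_{i}$-separated subfamilies $\Pi_{i}\subseteq\mathcal{T}$; Theorem~\ref{thm-Haussler-packing-lemma} gives $|\Pi_{i}|=O\!\left((m/\delta_{i})^{d}\right)=O(2^{id})$. By maximality one may fix, for each $i\ge1$, a ``parent'' map $\varphi_{i}\colon\Pi_{i}\to\Pi_{i-1}$ with $|P\,\triangle\,\varphi_{i}(P)|\le\delta_{i-1}$ for all $P$, and for each $T\in\mathcal{T}$ a leaf $\pi_{t}(T)\in\Pi_{t}$ with $|T\,\triangle\,\pi_{t}(T)|\le\delta_{t}$; iterating $\varphi$ walks $T$ up the tree through $Q_{i}(T)\in\Pi_{i}$, and telescoping gives $|\chi(T)|\le|\chi(Q_{0}(T))|+\sum_{i=1}^{t}\bigl|\chi(Q_{i}(T))-\chi(\varphi_{i}(Q_{i}(T)))\bigr|+\delta_{t}$. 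The crucial point is that the level-$i$ edge-difference $\chi(P)-\chi(\varphi_{i}(P))$ is indexed by the \emph{node} $P\in\Pi_{i}$, so there are only $|\Pi_{i}|=O(2^{id})$ of them, rather than $|\Pi_{i-1}|\cdot|\Pi_{i}|$ — this is precisely where the overcounting must be avoided. Now give all level-$i$ edge-differences a common width $\Delta_{i}$: for the coarse levels $\Pi_{i}$ is small enough that even $\Delta_{i}=O(1)$ stays within the entropy budget, contributing $O(\log m)$ to $|\chi(T)|$; for the finer levels take $\Delta_{i}$ of order $\sqrt{\delta_{i}\log|\Pi_{i}|}$, so that $|\Pi_{i}|\,g(\Delta_{i}/\sqrt{\delta_{i}})$ stays within its allotted share of $m/5$ and $\sum_{i}\Delta_{i}$ is a geometric-type sum dominated by the crossover level $\delta_{i}\approx m^{1-1/d}$ (where $|\Pi_{i}|\approx m$), of order $\sqrt{m^{1-1/d}}\cdot\mathrm{polylog}(m)=m^{1/2-1/2d}\cdot\mathrm{polylog}(m)$. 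This yields $|\chi(T)|=O\!\left(m^{1/2-1/2d}\mathrm{polylog}(m)\right)$ for all $T$ at once; distributing the entropy budget across levels geometrically rather than uniformly, together with the genuine $e^{-\lambda^{2}/\mathrm{const}}$ decay of $g$, removes the logarithmic factor and gives the clean $|\chi(T)|=O(m^{1/2-1/2d})$.

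The step I expect to be the real obstacle is this last one: setting up the parent maps so the increment decomposition is counted by nodes and not by pairs, placing the crossover scale at $\delta\approx m^{1-1/d}$, and then tuning the width schedule and the budget allocation delicately enough to remove the logarithmic slack and recover the sharp exponent $1/2-1/2d$. The remaining pieces — the reduction to partial colourings, the outer geometric summation over the $O(\log n)$ rounds, and the sub-Gaussian pigeonhole behind the entropy lemma — are standard.
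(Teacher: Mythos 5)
Your proposal is essentially the proof the paper points to for this theorem: the chaining decomposition over dyadic separation scales controlled by Haussler's packing lemma (Theorem~\ref{thm-Haussler-packing-lemma}), combined with Beck's entropy/partial-colouring lemma, widths tuned around the crossover scale $\delta\approx n^{1-1/d}$, and an outer geometric iteration over $O(\log n)$ rounds --- the same framework the paper itself mirrors (with the Lovett--Meka constructive lemma in place of Beck's) when proving its size-sensitive Theorem~\ref{thm-appln-disc-bds}. The plan and the key quantitative choices are correct, so no further comparison is needed.
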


Ezra~\cite{Ezra-sizesendisc-soda-14} 
generalised the above result to be case of set systems
with size sensitive primal shatter dimensions $d_{1}$ and $d_{2}$
and also make the discrepancy dependent on the size of the sets:

\begin{theo}[\cite{Ezra-sizesendisc-soda-14}]\label{thm-Ezra-size-sensitive-discrepancy}
  Let $(X, \mathcal{S})$ be a finite set system of primal shatter
  dimension $d$ with the additional property that in any set system
  restricted to $Y \subseteq X$, the number of sets of size $k \leq
  |Y|$
  is $O(|Y|^{d_{1}} k^{d-d_{1}})$, where $1\leq d_{1} \leq d$. Then
  $$ 
  \mathrm{disc}(\mathcal{S}) = \left\{ \begin{array}{ll} 
      O\left(|S|^{d_2/(2d)}n^{(d_1-1)/(2d)}\log^{1/2+1/2d}n \right), & \mbox{ if } d_1>1 \\
      O\left(|S|^{d_{2}/(2d)}\log^{3/2 + 1/2d} n \right), & \mbox{ if } d_1=1 
    \end{array} \right.
  $$
  where $d_{2} = d-d_{1}$.
\end{theo}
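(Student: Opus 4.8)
The plan is to run the Beck--\Matousek\ scheme that turns a packing estimate into a discrepancy bound --- exactly as in the proof of Theorem~\ref{thm-Matousek-discrepancy-bounded-primal-dimension} in~\cite{Matousek-tight-half-spaces-95,matousek-geomdisc-2009} --- but feeding it the \emph{size-sensitive} packing estimate of Theorem~\ref{thm-main-size-sens-bd} in place of Haussler's bound (Theorem~\ref{thm-Haussler-packing-lemma}). Fix a set $S\in\mathcal{S}$ and write $s=|S|$; since the construction yields a single colouring, it suffices to bound $|\chi(S)|$ in terms of $s$ and $n$ for an arbitrary $S$. The colouring $\chi$ is built from $O(\log n)$ rounds of Beck's partial-colouring (entropy) lemma~\cite{Beck-discrepancy-integer-sequence-81}: in round $r$ we hold an uncoloured ground set $X_r$, with $X_0 = X$ and $|X_{r+1}| \le \tfrac12 |X_r|$, and fix the colours on at least half of $X_r$.

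The chaining is the core. Along the geometric sequence of scales $\delta_j = n/2^{\,j}$ we pick, inside the projection $\mathcal{S}|_{X}$ (and its restrictions to the $X_r$'s), a $\delta_j$-packing $\mathcal{P}_j$ of the sub-family of sets of size $O(s)$, together with, for every $S$, a representative $\sigma_j(S)\in\mathcal{P}_j$ at Hamming distance at most $\delta_j$ from $S$; maximality of the packing guarantees such a $\sigma_j(S)$ exists. Telescoping gives $\chi(S) = \chi(\sigma_0(S)) + \sum_{j\ge 1}\chi\big(\sigma_j(S)\,\Delta\,\sigma_{j-1}(S)\big)$, up to a residual which is negligible once $\delta_j = O(1)$; the difference set $\sigma_j(S)\,\Delta\,\sigma_{j-1}(S)$ has size $O(\delta_{j-1})$ by the triangle inequality. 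What governs the cost of scale $j$ is the number $N_j$ of \emph{distinct} difference sets occurring there, and this is where Theorem~\ref{thm-main-size-sens-bd} enters: because the representatives have size $O(s)$ and the difference sets are small (size $O(\delta_{j-1})$), the size-sensitive packing bound yields $N_j = O\big((n/\delta_j)^{d_1}(s/\delta_j)^{d_2}\big)$, which carries the genuine $(s/\delta_j)^{d_2}$ factor --- not $(n/\delta_j)^{d_2}$ --- and no extra poly-logarithmic factor.

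With these per-scale counts in hand, the bound follows the template of the proof of Theorem~\ref{thm-Matousek-discrepancy-bounded-primal-dimension}: one distributes the $O(\log n)$ scales over the $O(\log n)$ partial-colouring rounds (roughly, scale $j$ is resolved in the round whose uncoloured ground set has size comparable to $\delta_{j-1}$), applies Beck's lemma in that round to the $N_j$ difference sets, all of size $O(\delta_{j-1})$, obtaining a contribution of order $\sqrt{\delta_{j-1}\,\log(2+N_j/\delta_{j-1})}$, and sums over scales, bounding the residual of the chain at the coarsest scale one chooses to resolve crudely by that scale. Substituting $N_j = O\big((n/\delta_j)^{d_1}(s/\delta_j)^{d_2}\big)$ and optimising the range of scales gives a geometric series when $d_1>1$, whose dominant term is $O\big(|S|^{d_2/2d}\,n^{(d_1-1)/2d}\,\log^{1/2+1/2d}n\big)$; when $d_1 = 1$ the $n$-exponent vanishes, the series becomes near-harmonic in the scale index and contributes the extra $\log n$, yielding $O\big(|S|^{d_2/2d}\,\log^{3/2+1/2d}n\big)$. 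The size-sensitivity of Theorem~\ref{thm-main-size-sens-bd} is exactly what makes the $|S|^{d_2/2d}$-dependence appear with the stated exponents of $n$ and $\log n$.

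The main obstacle is the careful accounting, and in particular the \emph{overcounting} of difference sets pointed out by Ezra~\cite{Ezra-sizesendisc-soda-14}. A naive estimate takes $N_j \le |\mathcal{P}_j|\cdot|\mathcal{P}_{j-1}|$, which inflates the final bound by a logarithmic factor; one must instead argue that, for a fixed representative at scale $j-1$, only $O(1)$ representatives at scale $j$ can lie within distance $\delta_{j-1}+\delta_j$ of it --- a ball-restricted version of the packing estimate --- so that $N_j = O(|\mathcal{P}_{j-1}|)$. Beyond this, one has to (i) track how $|S\cap X_r|$ shrinks so that Theorem~\ref{thm-main-size-sens-bd} is invoked with the right size parameter and the exponents come out exactly, (ii) deal separately with the (easier) rounds in which $|X_r| < s$, and (iii) check that the side conditions of Beck's lemma hold at every scale. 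It is precisely at these points that one needs the full strength of Theorem~\ref{thm-main-size-sens-bd}, namely a packing number with a clean $(l/\delta)^{d_2}$ factor and no spurious poly-logarithmic overhead.
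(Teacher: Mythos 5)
The theorem you set out to prove is \textbf{not proved in this paper at all}: it is Ezra's earlier result, cited as~\cite{Ezra-sizesendisc-soda-14}, stated here only as a point of comparison for the paper's own improvement, Theorem~\ref{thm-appln-disc-bds}. So there is no ``paper's own proof'' to match against. More importantly, your plan is internally inconsistent about what bound it produces. You propose to feed the paper's new packing estimate $N_j = O\big((n/\delta_j)^{d_1}(s/\delta_j)^{d_2}\big)$ (Theorem~\ref{thm-main-size-sens-bd}) into the chaining--partial-colouring machinery, and then claim the summation over scales yields the $\log^{1/2+1/2d}n$ factor of Ezra's bound. It should not. That poly-log factor in Ezra's result is an artefact of her \emph{weaker} packing estimate $O\big(j^d 2^{jd}/2^{(i-1)d_2}\big)$, which carries an extra $j^d$; once you replace it with the clean bound of Theorem~\ref{thm-main-size-sens-bd}, the very same chaining recurrence gives only the doubly-logarithmic correction $f(|S|,n)=\sqrt{1+2\log(1+\log\min\{|S|,n/|S|\})}$ appearing in Theorem~\ref{thm-appln-disc-bds}. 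In other words, your architecture, correctly executed, proves the \emph{stronger} theorem, and your reported $\log^{1/2+1/2d}n$ must be a bookkeeping slip: either you are not exploiting the geometric decay across scales that $d_1>1$ provides, or you are tacking on the $O(\log n)$ rounds of the colouring algorithm where they should telescope away.

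Two further concerns about the mechanics. First, both Ezra's proof and the paper's Section~4 use the constructive Lovett--Meka partial-colouring lemma (Lemma~\ref{lemma:lovett-meka}), not Beck's original entropy lemma; the entropy condition $\sum_S \exp(-\Delta_S^2/16|S|)\leq n/16$ must be verified \emph{per round}, and the interaction between the $O(\log n)$ rounds and the scale index $j$ is exactly where the $d_1=1$ versus $d_1>1$ dichotomy comes from (an extra $\log n$ when $d_1=1$, a convergent geometric series when $d_1>1$). Your proposal does not track this split. Second, the ``ball-restricted packing'' step you invoke to fix the overcounting --- asserting that only $O(1)$ representatives at scale $j$ lie within Hamming distance $\delta_{j-1}+\delta_j$ of a fixed scale-$(j-1)$ representative --- is both unjustified and unnecessary. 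A ball of radius $\Theta(\delta_j)$ can contain $\Theta\big((n/\delta_j)^{d_1}\big)$ members of a $\delta_j$-separated family, so the $O(1)$ claim is false in general. The chaining used here avoids the issue differently: each $S\in\mathcal{S}_i$ follows a \emph{single} closest-neighbour chain, so the family $\mathcal{M}_j^i$ of difference sets has $|\mathcal{M}_j^i| = 2|\mathcal{F}_j^i|$, bounded directly by the packing number at scale $j$ --- there is never a $|\mathcal{P}_j|\cdot|\mathcal{P}_{j-1}|$ product to begin with.
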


This bound is slightly suboptimal for the case for points and
halfspaces in the plane. 
Har-Peled and
Sharir.~\cite{Har-PeledS11-relative-approximation-geometry} 
proved that for the case of points and
halfspaces in the plane the discrepancy bound for a set $S$ is 
$O(|S|^{1/4} \log n)$, but the bound in
Theorem~\ref{thm-Ezra-size-sensitive-discrepancy} is a considerable 
improvement for the case of points and halfspaces in three dimensional 
space obtained by Sharir and Zaban~\cite{SharirZ-range-searching-13}, which extended the construction of 
Har-Peled and Sharir~\cite{Har-PeledS11-relative-approximation-geometry}.

Using our new size sensitive packing bound and 
Ezra's ~\cite{Ezra-sizesendisc-soda-14} 
refinement of 
Matou$\mathrm{\check{s}}$ek's chaining 
trick~\cite{Matousek-tight-half-spaces-95,matousek-geomdisc-2009}, 
we get the
following improvement 
to Theorem~\ref {thm-Ezra-size-sensitive-discrepancy}.

\begin{theo}\label{thm-appln-disc-bds}
   Let $(X,\mathcal{S})$ be a (finite) set system of primal shatter dimension $d$ and size-sensitive 
   shatter constants $d_1\leq d$ and $d_2=d-d_1$. Then 
   $$ 
   \mathrm{disc}(\mathcal{S}) = \left\{ \begin{array}{ll} 
       O\left(|S|^{d_2/(2d)}n^{(d_1-1)/(2d)} f(|S|, n)\right), & \mbox{ if } d_1>1 \\
       O\left(|S|^{d_2/(2d)} f(|S|,n) \log n \right), & \mbox{ if } d_1=1 
     \end{array} \right.
   $$
   where $f(|S|,n) = \sqrt{1+2\log \left( 1+ \log \min \left\{ |S|, \,
       \frac{n}{|S|}\right\}\right)}$.
 A coloring with the above discrepancy bounds 
 can be computed in expected polynomial time. 
 \end{theo}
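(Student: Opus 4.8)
The plan is to follow the classical Beck–Matoušek partial-coloring / entropy method, replacing the packing-number input with our new size-sensitive bound (Theorem~\ref{thm-main-size-sens-bd}) and using Ezra's refined chaining decomposition. At the top level, we produce the coloring by iterated halving: in phase $t$ the ground set has size roughly $n_t = n/2^t$, we find a partial coloring of this set that leaves the discrepancy of every range controlled, then recurse on the uncolored half. Summing the per-phase contributions as a geometric-type series gives the final bound, so the whole argument reduces to establishing a good per-phase bound.

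**Per-phase bound via chaining and entropy.** Fix a phase and let $\mathcal{S}'$ be the (projected) set system on the current ground set of size $m$. The key step is to bound the entropy of the random coloring, i.e., to exhibit, for each range $S\in\mathcal{S}'$, a ``test value'' $\Delta_S$ and an integer weight so that the expected number of bits needed to specify $\chi(S)$ up to error $\Delta_S$ is at most a small constant times $m$. Here we apply Ezra's chaining: decompose each set $S$ (with $|S|=l$) as a disjoint union/symmetric-difference telescoping $S = S_0 \Delta S_1 \Delta \cdots \Delta S_r$ where consecutive sets come from successive $\delta_j$-packings with $\delta_j = m/2^j$, so that $|S_j \Delta S_{j+1}| \le \delta_j$ and the ``level $j$'' pieces number at most the packing bound. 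Plugging in Theorem~\ref{thm-main-size-sens-bd}, the number of sets of size $\approx l$ at separation $\delta_j$ is $O\!\big((m/\delta_j)^{d_1}(l/\delta_j)^{d_2}\big) = O\big(2^{j d_1}(l\,2^j/m)^{d_2}\big)$, which is exactly where the $j^d$ overcounting factor in Ezra's bound disappears. The entropy contribution of level $j$ is then $\log(\text{packing count})$ times the discrepancy allotted to that level, and one balances the allotted discrepancies $\Delta_S^{(j)}$ across levels by a Lagrange-type optimization, subject to $\sum_j \Delta_S^{(j)} \ge \Delta_S$, to make the total entropy $O(m)$. Feeding $O(m)$ entropy into Beck's lemma (or Spencer's version) yields a partial coloring with $|\chi(S)| \le \Delta_S$ for all $S$, with $\Delta_S$ of the claimed per-phase order, namely $|S|^{d_2/(2d)} m^{(d_1-1)/(2d)} \cdot \sqrt{1+\log(\cdots)}$ (and an extra $\log$ when $d_1=1$).

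**Where the improvement and the subtlety lie.** The main obstacle — and the source of the improved factor $f(|S|,n)$ in place of Ezra's $\log^{1/2+1/2d} n$ — is the careful accounting of how many chaining levels actually contribute and over how many levels the error budget must be split. The number of relevant levels is not $\log n$ but only $\log\min\{|S|, n/|S|\}$: levels beyond $j$ with $\delta_j > |S|$ are vacuous (a $\delta$-packing of sets of size $l<\delta$ is trivial), and there is a symmetric truncation from the $(n/\delta)^{d_1}$ side. Optimizing the budget split over only $L := \log\min\{|S|,n/|S|\}$ levels (rather than $\log n$ of them) replaces a $\log n$ factor by $\log L$, i.e., by $\log\log\min\{|S|,n/|S|\}$, which is precisely $f$. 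One must also handle the ``overcounting of sets in the chaining procedure'' that Ezra flagged: when a set appears in packings at several levels, it should be charged only once; we address this (partially, as the abstract says) by organizing the chain so that each intermediate $S_j$ is chosen canonically (e.g., as the nearest element of the level-$j$ packing), so the bookkeeping across levels does not double-count the entropy.

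**Assembling the phases.** Finally, sum over phases $t = 0, 1, \dots, \log n$. In phase $t$ the ambient size is $m = n/2^t$ while $|S|$ shrinks to at most $|S|/2^t$ as well (restriction to the uncolored set), so the per-phase discrepancy bounds form a geometric series in $2^{-t(\ldots)}$ dominated by the first phase; the $f(|S|,n)$ factor, being slowly varying, can be pulled out. This gives the stated global bound. The expected-polynomial-time claim follows because each partial-coloring step can be realized algorithmically — either by the Bansal–Lovett–Meka / Lovett–Meka constructive version of the entropy method, or by the conditional-expectation derandomization already used for Matoušek's and Ezra's bounds — applied phase by phase, with $O(\log n)$ phases.
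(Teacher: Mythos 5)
Your proposal follows the same high-level route as the paper: Ezra's closest-neighbour chaining, the improved size-sensitive packing bound (Theorem~\ref{thm-main-size-sens-bd}) to count the decomposition pieces at each level, the entropy / partial-coloring framework realized via Lovett--Meka, and summation over $O(\log n)$ rounds with a geometric series for $d_1>1$ versus an extra $\log n$ for $d_1=1$. That matches the paper's structure.

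However, your explanation of where the factor $f(|S|,n)$ comes from is not quite the paper's mechanism, and as stated it does not actually hold. You claim the number of ``relevant'' chaining levels for $S$ is $\log\min\{|S|,n/|S|\}$. In fact a set $S$ with $|S|\approx n/2^{i-1}$ has a truncated chain running from $j=i-1$ up to $j=k=\log n$, i.e.\ roughly $\log|S|$ levels; for large $|S|$ (small $i$) this is close to $\log n$, not $\log(n/|S|)$, so ``counting levels'' alone cannot produce the symmetric $\min$. What actually produces the $\min$ in the paper is the choice of the compensation factor $\sqrt{1+2\log h}$ with $h=k/2-|i-k/2|=\min(i,k-i)$ built into $\Delta_j^i$: this is chosen so that the contribution to the Lovett--Meka entropy budget from size class $i$ is damped by $1/h^2$, making $\sum_i 1/h^2$ converge while keeping $h$ as small as possible for both ends of the size range. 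The additional factor $\frac{1}{(1+|j-j_0|)^2}$ then makes the sum over $j$ (the chain) be dominated by $j\approx j_0$. Your ``Lagrange-type budget split'' gestures at the right thing but does not expose this two-part allocation, and the stated intuition about the number of levels would, if taken literally, fail to give the small $f$ value for large $|S|$. So: same overall approach, but the key accounting step that produces $f$ needs to be repaired — replace the ``fewer levels'' argument by the explicit $\sqrt{1+2\log\min(i,k-i)}$ damping and verify that it simultaneously (i) keeps the entropy sum $\le n/16$ and (ii) yields the claimed per-set bound after summing $\sum_j\Delta_j^i$.
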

 
Note that the proof of both Theorems~\ref{thm-Ezra-size-sensitive-discrepancy} and 
\ref{thm-appln-disc-bds}
fundamentally uses the recent improvement to Beck's 
entropy method~\cite{Beck-discrepancy-integer-sequence-81} by 
Lovett and Meka~\cite{Lovett-Meka-discmin-focs-12}.

A set system $(X, \mathcal{S})$ is called {\em low degree}
if for all $j \in X$,
$j$ appears in at most $t \leq n$ sets in $\mathcal{S}$.
Beck and Fiala has been conjectured that discrepancy of low degree set system 
is $O(\sqrt{t})$~\cite{Beck-Fiala-integer-making-1981}. 
This is know as the {\em Beck-Fiala conjecture}.
Beck and Fiala ~\cite{Beck-Fiala-integer-making-1981}, 
using a linear programming approached showed that 
discrepancy of low degree set systems is bounded by $2t-1$.
Using entropy method, one can obtain a constructive bound 
of $O(\sqrt{t} \log n)$~\cite{Bansal-discrepancy-minimization-focs-10, 
Lovett-Meka-discmin-focs-12}. See
also~\cite{Srinivasan-soda-Matrices97}.
Currently the best bound (non-constructive) is by Banaszczyk~\cite{Banaszczyk-rsa-98}, who proved 
that the discrepancy is bounded by $O(\sqrt{t\log n})$.
We prove, in Section~\ref{sec-well-behaved-Beck-Fiala}, that for low degree set systems with
$d_{1} =1$, the discrepancy is bounded by
$O(t^{1/2-1/2d}\sqrt{\log\log t}\log n)$.
Specifically, for the case of points and halfspaces in $2$-dimensional
setting, we get $O(t^{1/4} \sqrt{\log \log t} \log n)$.
Also note that our result is constructive, i.e., in expected
polynomial time we can find a coloring that matches the above
discrepancy bound. 


\paragraph*{Relative $(\varepsilon, \delta)$-approximation and
  $(\nu,\alpha)$}

Using the improved size sensitive discrepancy bounds for sets we will 
be able to improve on the previous bounds for {\em relative $(\varepsilon,
  \delta)$-approximation}
and {\em $(\nu, \alpha)$-sample}. 

For a finite set system $(X,\mathcal{S})$, we define for $S \in \mathcal{S}$
 $\overline{X}(S) = \frac{|S\cap X|}{X}$. For a given $0 < \varepsilon
 < 1$ and $0< \delta < 1$, a subset $Z \subseteq X$ is a 
{\em relative $(\varepsilon, \delta)$-approximation } if  $\forall \,
S \in \mathcal{S}$
\begin{eqnarray*}
  &\overline{X}(S)(1-\delta) \leq \overline{Z}(S) \leq
  \overline{X}(S)(1+\delta), \;\; \mbox{if}~ \overline{X}(S) \geq
  \varepsilon, \; \mbox{and}&\\
  &\overline{X}(S)-\delta \varepsilon \leq \overline{Z}(S) \leq
  \overline{X}(S)+\delta \varepsilon, \; \mbox{otherwise}&
\end{eqnarray*}
Har-Peled and
Sharir~\cite{Har-PeledS11-relative-approximation-geometry} 
showed that the notion of relative $(\varepsilon, \delta)$-approximation
and $(\nu, \alpha)$-sample are equivalent if $\nu$ is proportional $\varepsilon$ and $\alpha$
is proportional to $\delta$. 
A $(\nu, \alpha)$-sample of a set system $(X, \mathcal{S})$ is a
subset $Z \subseteq X$ satisfying the following inequality 
$\forall \, S \in \mathcal{S}$:
\begin{equation}
  d_{\nu}(\overline{X}(S), \overline{Z}(S)) : = \frac{|\overline{X}(S)
    - \overline{Z}(S)|}{\overline{X}(S) + \overline{Z}(S) + \nu} < \alpha.
\end{equation}
Relative $(\varepsilon, \delta)$-approximation is an important tool
to tackle 
problems in {\em approximate range counting}~\cite{Har-PeledS11-relative-approximation-geometry}.

In Section~\ref{sec-improving-sample-bound-via-discrepancy}, 
we prove the following bound on the size of
relative $(\varepsilon, \delta)$-approximation
which is an 
improvement over the previous 
bounds~\cite{Har-PeledS11-relative-approximation-geometry,
Ezra-smallsizeapprangespace-socg-13,
Ezra-sizesendisc-soda-14}, the most recent one being \cite{Ezra-sizesendisc-soda-14},
who gave a bound of 
$$ \begin{array}{ll}
    \max\left\{O(\log n), O\left(\frac{\log 1/(\varepsilon\delta)}
                          {\varepsilon^{\frac{d+d_1}{d+1}}\delta^{\frac{2d}{d+1}}}\right)\right\},
                          \\ & \mbox{ for } d_1 > 1, \mbox{ and } \\
    \max\left\{O(\log^{\frac{3d+1}{d+1}} n), O\left(\frac{\log^{\frac{3d+1}{d+1}} 1/(\varepsilon\delta)}
                          {\varepsilon\delta^{\frac{2d}{d+1}}}\right)\right\},
                          & \mbox{ for } d_1 = 1.
   \end{array}
.$$
. 
\begin{theo}\label{thm-main-sample-size-bound}
  Let $(X, \mathcal{S})$ be a set system with $|X| = n$, 
  primal shatter dimension $d$ and size sensitive shattering constants
  $d_{1}$ and $d_{2} = d - d_{1}$. Then for $0 < \varepsilon < 1$
  and $0< \delta < 1$, $(X, \mathcal{S})$ has a relative 
  $(\varepsilon, \delta)$-approximation of size
  $$
  O\left( \frac{\log \log^{\frac{2d}{d+1}}
      \frac{1}{\varepsilon\delta}}{\varepsilon^{\frac{d+d_{1}}{d+1}}
      \delta^{\frac{2d}{d+1}}}\right).
  $$
  for $d_{1} > 1$, and 
  $$
  \max \left\{ O\left( \log^{\frac{2d}{d+1}} n \right), 
    \, O\left( \frac{\log^{\frac{2d}{d+1}} \frac{1}{\varepsilon\delta} \, \log
        \log^{\frac{2d}{d+1}} \frac{1}{\varepsilon\delta}}
      {\varepsilon \delta^{\frac{2d}{d+1}}}\right) \right\}
  $$
  for $d_{1} = 1$. The constant in big-$O$ depends only on $d$, and 
  a relative $(\varepsilon, \delta)$-approximation with above bounds 
  can be computed in expected polynomial time. 
\end{theo}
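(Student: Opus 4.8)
The plan is to construct $Z$ by the standard \emph{iterated halving} (merge--reduce) reduction from combinatorial discrepancy to relative approximations, exactly as in Har-Peled and Sharir~\cite{Har-PeledS11-relative-approximation-geometry} and Ezra~\cite{Ezra-sizesendisc-soda-14}; the only new ingredient is that we feed this machine the improved size-sensitive discrepancy bound of Theorem~\ref{thm-appln-disc-bds} in place of Theorem~\ref{thm-Ezra-size-sensitive-discrepancy} (which is precisely what turns Ezra's $\log\tfrac{1}{\varepsilon\delta}$ into $\log\log\tfrac{1}{\varepsilon\delta}$). Put $Z_0=X$, $n_0=n$, and for $i=1,\dots,r$ let $\chi_i$ be a coloring of $(Z_{i-1},\mathcal{S}|_{Z_{i-1}})$ realizing Theorem~\ref{thm-appln-disc-bds} (one coloring suffices, since that bound is size-sensitive and holds for all ranges at once), let $Z_i$ be one color class, so $n_i\approx n_{i-1}/2$, and output $Z=Z_r$; then $|Z|=\Theta(n/2^r)$, with $r$ fixed below.

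The error bookkeeping is the usual telescoping argument. For a fixed $S$ a single step changes the relative weight by $|\overline{Z_i}(S)-\overline{Z_{i-1}}(S)|\le |\chi_i(S\cap Z_{i-1})|/n_{i-1}$; a routine induction using that the cumulative error stays small keeps $|S\cap Z_i|=\Theta(\overline{X}(S)\,n_i)$, so Theorem~\ref{thm-appln-disc-bds} may be applied at level $i$ with set size $\Theta(\overline{X}(S)\,n_{i-1})$ and ground set size $n_{i-1}$. Substituting, and using $d_1+d_2=d$, the per-step error is $O\bigl(\overline{X}(S)^{d_2/(2d)}\,n_{i-1}^{-(d+1)/(2d)}\,f_i\bigr)$ when $d_1>1$, and $O\bigl(\overline{X}(S)^{d_2/(2d)}\,n_{i-1}^{-(d+1)/(2d)}\,f_i\log n_{i-1}\bigr)$ when $d_1=1$, where $f_i=f(\Theta(\overline{X}(S)n_{i-1}),n_{i-1})$. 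Since $n_{i-1}$ halves while $f_i$ and $\log n_{i-1}$ change slowly, these per-step errors increase geometrically in $i$, so the total cumulative error $\varepsilon_{\mathrm{err}}(S)$ is, up to an absolute constant, its value at the last level $r$.

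It remains to pick $r$ (equivalently $|Z|=n_r$) so that $\varepsilon_{\mathrm{err}}(S)\le\delta\,\overline{X}(S)$ whenever $\overline{X}(S)\ge\varepsilon$ and $\varepsilon_{\mathrm{err}}(S)\le\delta\varepsilon$ otherwise. Because $\overline{X}(S)^{d_2/(2d)}/\overline{X}(S)=\overline{X}(S)^{-(d+d_1)/(2d)}$ decreases in $\overline{X}(S)$, both conditions are tightest at $\overline{X}(S)=\varepsilon$; solving $\varepsilon^{d_2/(2d)}n_r^{-(d+1)/(2d)}f_r\le\delta\varepsilon$ (with the extra $\log n_r$ in the $d_1=1$ case) gives $n_r=O\bigl((f_r/\delta)^{2d/(d+1)}\varepsilon^{-(d+d_1)/(d+1)}\bigr)$ for $d_1>1$ and $n_r=O\bigl((f_r\log n_r/\delta)^{2d/(d+1)}\varepsilon^{-1}\bigr)$ for $d_1=1$. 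To dispose of the resulting mild circularity one may assume the claimed bound $B$ satisfies $B\le n$ (otherwise $Z=X$ already works, with $|Z|=n\le B$) and fix $r$ so that $n_r=\Theta(B)$; then $\log n_r=O(\log\tfrac{1}{\varepsilon\delta})$, and since at the tight scale $\overline{X}(S)=\varepsilon$ the minimum inside $f_r$ is at most $n_r/|S\cap Z_{r-1}|=\Theta(1/\varepsilon)\le 1/(\varepsilon\delta)$, we get $f_r=O\bigl(\sqrt{\log\log\tfrac{1}{\varepsilon\delta}}\bigr)$ and hence $f_r^{2d/(d+1)}=O\bigl(\log\log^{2d/(d+1)}\tfrac{1}{\varepsilon\delta}\bigr)$. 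This produces the stated bounds, the $\max$ with $O(\log^{2d/(d+1)}n)$ for $d_1=1$ absorbing the sample size forced by the leftover $\log$-factor when $\varepsilon,\delta$ are not small. Constructiveness is immediate: there are $r=O(\log n)$ rounds and each coloring is computed in expected polynomial time by the Lovett--Meka algorithm~\cite{Lovett-Meka-discmin-focs-12} underpinning Theorem~\ref{thm-appln-disc-bds}.

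The main obstacle here is bookkeeping, not conceptual: keeping the invariant $|S\cap Z_i|=\Theta(\overline{X}(S)n_i)$ in lockstep with the error estimate so the size-sensitive discrepancy is always invoked at the right scale, untangling the self-referential dependence of $f_r$ (and of $\log n_r$ for $d_1=1$) on the very quantity $n_r$ one is trying to bound, and nailing down the exact exponents and the $\max$-term in the $d_1=1$ case. None of these is deep, but matching the displayed bounds requires care.
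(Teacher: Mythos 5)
Your overall strategy (iterated halving driven by Theorem~\ref{thm-appln-disc-bds}, and solving for the stopping scale) is the same as the paper's, but the error bookkeeping is genuinely different, and the difference is worth noting. The paper works in the $(\nu,\alpha)$-sample formalism and tracks the quasi-metric $d_{\nu}(\overline{X}_{i-1}(S),\overline{X}_i(S))$ step by step; the normalization by $\overline{X}_{i-1}(S)+\overline{X}_i(S)+\nu$ lets one pass from the size-sensitive factor $\overline{X}_{i-1}(S)^{d_2/(2d)}$ to a $\nu$-dependent bound via $x^{t}\le (x+y)/y^{1-t}$, and then the triangle inequality for $d_\nu$ (Haussler, Li--Long--Srinivasan) gives the cumulative error with no self-reference at all. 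By contrast, you track the absolute relative error $|\overline{Z_i}(S)-\overline{Z_{i-1}}(S)|$ directly and invoke an invariant $|S\cap Z_i|=\Theta(\overline{X}(S)\,n_i)$ to keep the size-sensitive discrepancy applied at the correct scale. That invariant is precisely the conclusion you are trying to prove, so the ``routine induction'' is really a simultaneous induction on the invariant and on the cumulative-error bound: one must argue that if the error has stayed below the threshold up to step $i-1$ then it does so at step $i$, which closes because the per-step errors are geometrically increasing and hence dominated by the last one. This works, but it is exactly the circularity the paper's $d_\nu$ framework is designed to avoid; I'd encourage you to write the induction out explicitly, since it is the one place your argument is not self-contained. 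Your handling of the self-reference in $f_r$ and $\log n_r$ (by assuming $B\le n$ and setting $n_r=\Theta(B)$) matches the paper's in spirit, and your observations that the constraint is tightest at $\overline{X}(S)=\varepsilon$ and that the per-step errors are geometric (so only the last level matters) are both correct and correspond to steps taken in the paper's proof.
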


\section{Preliminaries}
   In this section we cover some basic concepts of discrepancy theory, especially in 
a geometric setting, which will be needed in the following sections.

\subsection{The Beck-Spencer method, and the Lovett-Meka algorithm}
In the discrepancy upper bound problem, given a universe of elements $X$, and a 
subset $\mathcal{S}$ of its power set, $\mathcal{S} \subset 2^X$, we wish to find 
a coloring $\chi:X\rightarrow [-1,1]$ which minimizes the \emph{imbalanace} in every 
set $S\in \mathcal{S}$.
In~\cite{Beck-discrepancy-integer-sequence-81}, Beck introduced a technique to obtain such colorings - the method of \emph{partial 
coloring}. The idea is to color a substantial fraction of elements, while leaving others 
uncolored. This allows for low discrepancy in the partially colored universe. The remaining 
elements are then colored recursively. The technique was then further developed by 
Spencer~\cite{Spencer-six-standard-85}, 
and is one of the major techniques used extensively in discrepancy theory.
In the proofs of Beck and Spencer, the partial coloring was an existential result, and did 
not yield polynomial time algorithms to give low-discrepancy colorings.
Recently however, a breakthrough result of Bansal~\cite{Bansal-discrepancy-minimization-focs-10} 
provided the first polynomial time
algorithm to obtain low-discrepancy colorings whose existence was implied by the Beck-Spencer 
technique. Subsequently, Lovett and Meka~\cite{Lovett-Meka-discmin-focs-12} 
gave a constructive version of the 
Beck-Spencer partial coloring lemma. We describe their lemma below:

Given a parameter $\delta \geq 0$, a \emph{partial coloring} of $X$ is a function 
$\chi:X\rightarrow [-1,1]$, where if for some $x$, $|\chi(x)| \geq 1-\delta$, then we say that 
$x$ is \emph{colored}, otherwise $x$ is \emph{uncolored}.

\begin{lemma}[Lovett-Meka~\cite{Lovett-Meka-discmin-focs-12}] \label{lemma:lovett-meka}
   Let $(X,\mathcal{S})$ be a set system with $|X|=n$. Let $\Delta:\mathcal{S}\rightarrow \mathbb{R}_+$ 
be such that 
   $$ \sum_{S \in \mathcal{S}} exp(-\Delta_S^2/(16|S|)) \:\: \leq\:\:  n/16 .$$
Then, there exists $\chi:X\rightarrow [-1,1]^n$ with $|\{i:|\chi_i| =1\}|\geq n/2$, such that 
$|sum_{i\in S}\chi_i|\leq \Delta_S + 1/poly(n)$ for every $S \in \mathcal{S}$. Further, there exists 
a randomized $poly(|\mathcal{S}|, n)$-time algorithm to find $\chi$.
\end{lemma}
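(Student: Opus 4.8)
The statement is the main theorem of Lovett and Meka~\cite{Lovett-Meka-discmin-focs-12}, so in the paper it is really just quoted; but here is the route I would take to prove it. The plan is to produce $\chi$ by a constrained Gaussian random walk (an \emph{edge walk}) inside the cube $[-1,1]^n$. Start at $x_0=0$. At step $t$, call a coordinate $i$ \emph{frozen} if $|x_t(i)|\ge 1-\gamma$, and call a set $S$ \emph{tight} if $|\sum_{i\in S}x_t(i)|\ge\Delta_S-c\gamma\sqrt{|S|}$, for a small step parameter $\gamma=1/\mathrm{poly}(n)$ and a constant $c$. Let $V_t\subseteq\mathbb{R}^n$ be the linear subspace of vectors that vanish on every frozen coordinate and are orthogonal to $v_S$ for every tight $S$; sample $\gamma_t$ as a standard Gaussian scaled by $\gamma$ and projected onto $V_t$, set $x_{t+1}=x_t+\gamma_t$ (clipping a coordinate exactly to $\pm1$ if it would overshoot), and run for $T=\Theta(\gamma^{-2})$ steps. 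Since frozen coordinates never move again and tight sets keep their signed sum fixed, the final point $x_T$ automatically lies in $[-1,1]^n$ and satisfies $|\sum_{i\in S}x_T(i)|\le\Delta_S+O(\gamma\sqrt{|S|})=\Delta_S+1/\mathrm{poly}(n)$ for every $S$, \emph{provided} the walk can always proceed, i.e.\ $\dim V_t>0$.

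First I would control the tight sets. For each $S$, the process $Y^S_t=\langle v_S,x_t\rangle=\sum_{i\in S}x_t(i)$ is a martingale whose one-step conditional variance is $\mathbb{E}[\langle v_S,\gamma_t\rangle^2\mid x_t]=\gamma^2\|P_{V_t}v_S\|^2\le\gamma^2|S|$, because orthogonal projection only shrinks norms. Hence $Y^S$ is a martingale with total quadratic variation at most $\gamma^2|S|T=O(|S|)$, and a Gaussian-type tail inequality for such martingales gives $\Pr[\exists t\le T:\ |Y^S_t|\ge\Delta_S]\le 2\exp(-c'\Delta_S^2/|S|)$. Tuning the internal constants so that this is at most $2\exp(-\Delta_S^2/(16|S|))$ and summing over $\mathcal{S}$, the hypothesis $\sum_S\exp(-\Delta_S^2/(16|S|))\le n/16$ gives that the expected number of sets that ever become tight is at most $n/8$; by Markov's inequality, with probability at least $3/4$ at most $n/4$ sets are ever tight, so as long as fewer than $3n/4$ coordinates are frozen we have $\dim V_t\ge n-\#\{\text{frozen}\}-n/4>0$ throughout, and no constraint is ever violated beyond its slack.

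Next I would show that with constant probability at least $n/2$ coordinates freeze by time $T$. Deterministically $\sum_i x_T(i)^2\le n$. On the other hand, while $\dim V_t\ge n/4$ the walk injects total expected energy $\sum_i\mathbb{E}[\gamma_t(i)^2\mid x_t]=\gamma^2\dim V_t\ge\gamma^2 n/4$ per step, so over $T=\Theta(\gamma^{-2})$ steps the coordinates collectively absorb $\Theta(n)$ units of expected squared displacement; comparing with the bound $n$ forces a constant fraction of the mass to be parked at $\pm1$. To get the sharp constant $n/2$ I would run the cleaner Lovett--Meka argument: a coordinate that is never frozen essentially performs a one-dimensional reflected/absorbed Brownian motion of total variance $\Theta(1)$ started at $0$, and such a motion hits $\{-1,+1\}$ with probability $\ge1-\varepsilon$ once $T\gamma^2$ is a large enough constant; hence $\mathbb{E}[\#\{\text{frozen at time }T\}]\ge(1-\varepsilon)n$, and Markov's inequality bounds $\Pr[\#\{\text{frozen}\}\ge n/2]$ below by a constant. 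A union bound with the previous paragraph then yields, with positive probability, a point $x_T$ having $\ge n/2$ coordinates in $\{\pm1\}$ and all set sums within $\Delta_S+1/\mathrm{poly}(n)$; output $\chi=x_T$. Algorithmically, with $\gamma=1/\mathrm{poly}(n)$ the walk has $\mathrm{poly}(n)$ steps, each requiring only an orthonormal basis of the subspace $V_t$ cut out by at most $n$ explicit linear constraints and one Gaussian sample in it, all in $\mathrm{poly}(|\mathcal{S}|,n)$ time; $O(\log n)$ independent repetitions boost the success probability. The main obstacle is the second claim: because $\gamma_t$ is an \emph{anisotropic} projected Gaussian rather than an isotropic one, the coordinates are not independent, and the delicate point is to show the projections never rob too many coordinates of the energy they need to reach the boundary — this is exactly where the bound on the number of tight constraints feeds back in, and it is the step I expect to demand the most care.
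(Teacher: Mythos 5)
The paper does not prove this lemma; it is a verbatim citation of the main partial-coloring theorem of Lovett and Meka~\cite{Lovett-Meka-discmin-focs-12}, used later as a black box inside the chaining argument. So there is nothing in this paper to compare your proof against. That said, your sketch is a faithful reconstruction of the Lovett--Meka edge-walk argument: the constrained Gaussian walk with frozen coordinates and tight constraints, the martingale concentration bound $\Pr[\exists t:|Y^S_t|\ge\Delta_S]\le 2\exp(-c'\Delta_S^2/|S|)$ coming from the fact that once $S$ becomes tight the process $Y^S_t$ stops accumulating quadratic variation, the budget $\sum_S\exp(-\Delta_S^2/(16|S|))\le n/16$ controlling the expected number of tight constraints, and the energy/variance argument forcing a constant fraction of coordinates to park at $\pm1$. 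These are exactly the ingredients in the original paper. Two small cautions: the clean constant ``at least $n/2$ colored coordinates'' does not drop straight out of the crude energy-comparison you give in your third paragraph; Lovett and Meka obtain a guarantee of the form $|\{i:|x_T(i)|\ge 1-\delta\}|\ge cn$ for a constant $c$ via a more careful analysis of the per-coordinate variance in the allowed subspace, and the stated $n/2$ (with the $16$'s in the hypothesis) is one particular tuning of that machinery, so you would need to track those constants explicitly rather than wave at ``$(1-\varepsilon)n$''. And your closing worry about the anisotropic projected Gaussian robbing coordinates of energy is precisely the crux: the resolution is that the set of tight constraints plus frozen coordinates can only cut the dimension of $V_t$ by at most $n/4+\#\{\text{frozen}\}$, so as long as fewer than (say) $n/2$ coordinates are frozen the walk still injects $\Omega(\gamma^2 n)$ variance per step somewhere, and a second-moment argument over coordinates (not just the aggregate energy bound) shows many individual coordinates must absorb $\Omega(1)$ variance. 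If you fill in those two details you have a complete proof; as a reconstruction of a cited theorem it is on target.
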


The above lemma can be recursively applied on the remaining uncolored elements of $X$, to obtain 
a full coloring function $\chi:X\rightarrow [-1,-1+\delta]\cup [1-\delta,1]$. This can be rounded to 
a coloring in $[-1,1]^n$ by choosing $\delta$ sufficiently small. We shall refer to each 
such application of Lemma \ref{lemma:lovett-meka} as one round of the Lovett-Meka algorithm; a complete
coloring, then, requires $O(\log n)$ such rounds. Denoting the bound for the set $S$ in the $j$-th round
by $\Delta_{S,j}$, the discrepancy $|\sum_{i\in S}\chi_i|$ of the set $S$ in the final coloring is bounded 
from above by $\Delta_S=\sum_j \Delta_{S,j}$.

\subsection{Chaining, and size-sensitive shattering constants}

Now we describe the chaining decomposition, as used by 
\Matousek~\cite{Matousek-tight-half-spaces-95,matousek-geomdisc-2009}, 
and further refined by Ezra~\cite{Ezra-sizesendisc-soda-14}. 
We are given a set system $(X,\mathcal{S})$. For each 
$j=0,\ldots,\log n=k$, we first form a \emph{maximal} $\delta=n/2^j$-separated family, 
$\mathcal{F}_j \subset 2^X$. Clearly, $\mathcal{F}_k = \mathcal{S}$, and $\mathcal{F}_0=\emptyset$.
Since each family $\mathcal{F}_i$ is maximal, for each $F_i \in \mathcal{F}_i$, there exists a 
$F_{i-1}\in \mathcal{F}_{i-1}$, such that $|F_i \Delta F_{i-1}| \leq n/2^{i-1}$. If $F_i \in \mathcal{F}_{i-1}$, 
then we are done. Otherwise, suppose the statement were not true, 
then $F_i$ would have symmetric difference at least $n/2^{i-1}$ from every member of $\mathcal{F}_{i-1}$,
and so would have to be a member of $\mathcal{F}_{i-1}$, which contradicts the maximality of $\mathcal{F}_{i-1}$.

\paragraph*{The first decomposition} 
Notice that since every $S \in \mathcal{S}$ lies in $\mathcal{F}_k$, applying 
the above property, we can find for each such $S$, $F_{k-1} \in \mathcal{F}_{k-1}$ such that the Hamming distance
of $S$, $F_{k-1}$ is at most $n/2^{k-1}=2$. Let $A_{k}=S\setminus F_{k-1}$, and $B_{k}= F_{k-1}\setminus S$, 
and let $F_k$ denote $S$. 
Clearly, $S= F_k = (F_{k-1} \cup A_{k-1}) \setminus B_{k-1}$.
Extending this argument further, we get that 
\begin{eqnarray*}
S = F_k &=&   (\ldots ((((F_0=\emptyset \cup A_1) \setminus B_1)\cup A_2)\setminus B_2)\cup \ldots \cup A_{k}) \setminus B_{k} \\
&=&   (\ldots (((A_1 \setminus B_1)\cup A_2)\setminus B_2)\cup \ldots \cup A_{k}) \setminus B_{k}
\end{eqnarray*}
where for each $j=1,\ldots,k$, $F_{j-1}\in \mathcal{F}_{j-1}$ is the closest neighbour of $F_j$ in $\mathcal{F}_{j-1}$, and 
$A_i := F_i \setminus F_{j-1}$, and $B_{j} := F_{j-1} \setminus F_i$.
We call the sequence $S= F_k \rightarrow F_{k-1} \rightarrow ... F_1$ as the \emph{closest-neighbour chain} of 
$S$.

This decomposition is now made size-sensitive by the following refinement: partition the 
sets in $\mathcal{S}$ into $\mathcal{S}_1,\ldots,\mathcal{S}_k$, where for $S\in \mathcal{S}$,
$S\in \mathcal{S}_i$ if and only if
$$ \frac{n}{2^{i}} \leq S \leq \frac{n}{2^{i-1}}.$$
For a fixed $S_i \in \mathcal{S}_i$, consider the truncated closest-neighbour chain and the corresponding decomposition:
$$S_i= F_k^i \rightarrow F_{k-1}^i \rightarrow ... F_{i-1}^i,$$
$$S =   (\ldots ((F_{i-1}\cup A_i)\setminus B_i)\cup \ldots \cup A_{k}) \setminus B_{k} .$$
We now construct the size-sensitive families $\mathcal{F}_j^i$, by following, for each $S \in \mathcal{S}_i$, the 
truncated closest-neighbour chain of $S$, and assigning each $F_j^i$ in this chain to the family $\mathcal{F}_j^i$.
The following properties can be easily proven using the triangle-inequality on the closest-neighbour chain of $S_i$:
\begin{pre}
    For each for the sets $F_j^i \in \mathcal{F}_j^i$, $j = i-1,\ldots,k$, we have: 
    $$ |S\Delta F_j^i| < O\left(\frac{n}{2^{j-1}}\right) .$$
\end{pre}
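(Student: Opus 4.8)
The plan is to obtain the bound directly from the triangle inequality for the symmetric-difference metric, fed by the one-step estimates that define the closest-neighbour chain. Recall that $(A,B) \mapsto |A\Delta B|$ is precisely the Hamming distance between the indicator vectors $v_A$ and $v_B$, and hence is a genuine metric on $2^X$; in particular $|A \Delta C| \le |A\Delta B| + |B\Delta C|$ for all $A,B,C$.

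First I would record the one-step estimate along the truncated chain $S_i = F_k^i \to F_{k-1}^i \to \cdots \to F_{i-1}^i$. By construction, for each $m$ with $i \le m \le k$ the set $F_{m-1}^i$ is a closest neighbour of $F_m^i$ inside the maximal $(n/2^{m-1})$-separated family $\mathcal{F}_{m-1}$, so the maximality argument already given in this section (were $F_m^i$ at symmetric-difference distance more than $n/2^{m-1}$ from every member of $\mathcal{F}_{m-1}$, it could be added to $\mathcal{F}_{m-1}$, a contradiction) yields $|F_m^i \Delta F_{m-1}^i| \le n/2^{m-1}$.

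Next I would telescope along the chain from level $k$ down to level $j$. Using $S = F_k^i$ and the triangle inequality,
\[
|S\Delta F_j^i| \;=\; |F_k^i \Delta F_j^i| \;\le\; \sum_{m=j+1}^{k} |F_m^i \Delta F_{m-1}^i| \;\le\; \sum_{m=j+1}^{k} \frac{n}{2^{m-1}} \;<\; \frac{n}{2^{j-1}},
\]
where the final inequality bounds a geometric series whose largest term (attained at $m=j+1$) equals $n/2^{j}$. The endpoints are immediate: for $j=k$ the sum is empty and the claim reads $0 \le n/2^{k-1}$, while $j=i-1$ is the bottom of the truncated chain.

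There is essentially no obstacle here; the statement in fact holds with the explicit constant $1$ in place of the big-$O$ constant. The only points that need care are that $|\cdot\,\Delta\,\cdot|$ satisfies the triangle inequality, that the chain is indexed so that the step descending from level $m$ to level $m-1$ carries separation parameter $n/2^{m-1}$ (so that the per-step distances shrink geometrically as $m$ increases toward $k$), and the trivial endpoints. If a specific larger constant is more convenient for the downstream discrepancy arguments, it is read off from the same computation.
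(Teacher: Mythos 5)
Your proof is correct and follows the same route the paper indicates (triangle inequality telescoped along the closest-neighbour chain, with the one-step bound $|F_m^i \Delta F_{m-1}^i| \le n/2^{m-1}$ coming from the maximality of $\mathcal{F}_{m-1}$); the paper simply leaves the geometric-series computation implicit, which you correctly carry out to obtain the constant $1$.
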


\begin{pre}
    For each for the sets $F_j^i \in \mathcal{F}_j^i$, $j = i-1,\ldots,k$, we have: 
    $$ |F_j^i| < O\left(\frac{n}{2^{j-1}}\right) .$$
\end{pre}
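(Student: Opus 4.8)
The plan is to read this off directly from the proposition just established together with the size bound that defines the class $\mathcal{S}_i$. The only tool needed is that the symmetric difference is a genuine metric on $2^X$ — it is precisely the Hamming distance between indicator vectors — so in particular it satisfies $|A| \le |B| + |A\Delta B|$ for all $A, B \subseteq X$, since $A \subseteq B \cup (A\Delta B)$. Applying this with $A = F_j^i$ and with $B = S$, where $S \in \mathcal{S}_i$ is (one of) the set(s) whose truncated closest-neighbour chain caused $F_j^i$ to be placed into $\mathcal{F}_j^i$, we get
\[ |F_j^i| \;\le\; |S| + |S\Delta F_j^i| . \]

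Now I would bound the two summands separately. For the first, membership $S \in \mathcal{S}_i$ gives immediately $|S| \le n/2^{i-1}$. For the second, the preceding proposition gives $|S\Delta F_j^i| = O(n/2^{j-1})$. Combining these two estimates yields the asserted bound on $|F_j^i|$; the one thing to verify is that the resulting constant in the big-$O$ is uniform in $n$, $i$ and $j$, which is automatic because both input constants are, and because along the chain the two error terms are of the same order over the relevant range of $j$ (recall $j \ge i-1$), so the size term $|S|$ is absorbed rather than controlling the estimate.

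I do not anticipate any genuine obstacle here: the statement is an essentially one-line corollary of the preceding proposition, and whatever substance there is in this part of the argument lies in the construction of the size-sensitive families $\mathcal{F}_j^i$ and in the two triangle-inequality estimates (the preceding proposition) rather than in this consequence. The only mild care required is bookkeeping — choosing the representative $S$ consistently with the chain that inserted $F_j^i$ into $\mathcal{F}_j^i$, and keeping track of which of the two error terms dominates across the range $j = i-1, \ldots, k$.
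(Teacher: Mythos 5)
Your approach is exactly the one the paper intends (the authors merely remark that both propositions ``can be easily proven using the triangle-inequality on the closest-neighbour chain''), and the opening steps are correct: the inequality $|F_j^i| \le |S| + |S\Delta F_j^i|$ is the right starting point, $|S|\le n/2^{i-1}$ follows from $S\in\mathcal S_i$, and $|S\Delta F_j^i|=O(n/2^{j-1})$ is the preceding proposition.

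However, the final step of your argument is wrong. You assert that ``the two error terms are of the same order over the relevant range of $j$, so the size term $|S|$ is absorbed''. This is false: the range is $j=i-1,\ldots,k$, so $j$ can be much \emph{larger} than $i$, in which case $n/2^{j-1}\ll n/2^{i-1}$ and the $|S|$ term \emph{dominates} rather than being absorbed. What your derivation actually yields is
\[
|F_j^i| \;\le\; |S| + |S\Delta F_j^i| \;=\; O\!\left(\frac{n}{2^{i-1}}\right) + O\!\left(\frac{n}{2^{j-1}}\right) \;=\; O\!\left(\frac{n}{2^{i-1}}\right),
\]
using $j\ge i-1$, and this cannot be tightened to $O(n/2^{j-1})$.

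In fact this error reveals that the proposition as printed appears to contain a typo: it should read $|F_j^i| = O(n/2^{i-1})$, not $O(n/2^{j-1})$. A quick sanity check confirms this: for $j=k$ one has $F_k^i = S$, so $|F_k^i|=|S|\in[n/2^i, n/2^{i-1}]$, whereas the stated bound would give $O(n/2^{k-1})=O(1)$, which fails badly for small $i$. The corrected bound is also the one the chaining argument actually needs: when the size-sensitive packing lemma (Theorem~\ref{thm-main-size-sens-bd}) is applied to $\mathcal F_j^i$ with separation $\delta=n/2^j$, it is the set size $l=O(n/2^{i-1})$ that produces the stated count $|\mathcal F_j^i|=O\bigl(2^{jd}/2^{(i-1)d_2}\bigr)$. (The quantity $O(n/2^{j-1})$ belongs instead to $|A_j^i|,|B_j^i|\le|F_j^i\Delta F_{j-1}^i|$, which is what feeds into $s_j$ in the entropy calculation.) So: right method, but the ``absorption'' justification is the reverse of what happens, and with it corrected your argument proves the corrected proposition, not the printed one.
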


Similar to the definitions of $A_j, B_j$, we construct the size-sensitive families 
$A_j^i= F_j^i \setminus F_{j-1}^i$, and $B_j^i = F_{j-1}^i \setminus F_j^i$, for each 
$F_j^i\in \mathcal{F}_j^i$.
Finally, for each fixed $i = 1,\ldots, k$, $j=i-1,\ldots, k$, let $\mathcal{M}_j^i$ 
denote the collection of $A_j^i, B_j^i$.

Observe that for each $i,j$, we have $|\mathcal{M}_j^i|=2|\mathcal{F}_j^i|$. Further, 
the size of each $A_j^i,B_j^i\in \mathcal{M}_j^i$, is $O\left(n/2^{i-1}\right)$.
We shall later apply the Beck-Spencer partial coloring technique, on the set-system $\left(X,\bigcup_{i,j}\mathcal{M}_j^i\right)$.

\section{Size-sensitive packing bound
}
%
%

  In this section, we shall prove a size-sensitive version of Haussler's upper bound for $\delta$-
  separated systems in set-systems of bounded primal shatter dimension.
  By Haussler's result \cite{Haussler92spherepacking}, we know that $M = O(n/\delta)^d=
  (n/\delta)^{d_1}(l/\delta)^{d_2}.g(n,l,\delta)^d$, where $g(n,l,d) = O((n/l)^{d_2})$.
  We want to show the optimum upper bound for $g$ is independent of $n,l$.
  We shall build on Chazelle's presentation of Haussler's proof, (which has been described by 
  {\Matousek} as ``a magician's trick") as explained in \cite{matousek-geomdisc-2009}. 
  We shall show that the optimal bound (up to constants) is in fact, $g=c^*$, where $c^*$ is the fixed point of 
  $f(x)= c'\log x$, with $c'>0$ independent of $n,l,\delta$. 

  \paragraph*{Intuition} 

  We provide some intuition for our extension of Haussler's proof below (at least to the reader familiar 
  with it). A na\"{i}ve attempt to extend Haussler's proof to 
  size-sensitive shattering constants fails because the proof essentially uses a random 
  sampling set  $A$, which for our purposes, has to behave somewhat like a $(\varepsilon,\delta/n)$-approximation, 
  (at least with respect to upper-bounds on the intersection sizes), with at least a constant probability. This 
  is unlikely to be true for a random set $A$. We shall therefore, not require that $A$ behave like an $(\varepsilon, \delta/n)$-
  approximation. Instead, we shall allow some sets in $\calbddsets$ to have larger than expected intersections
  with $A$, and control the expected number of such `bad' sets by our choice of the size of the sample set $A$.

  \paragraph*{Details}
  
  Let $A \subset X$ be a random set, constructed by choosing each element $u \in X$ randomly with 
  probability $p = \frac{36dK}{\delta}$, where $K\geq 1$ is a parameter to be 
  fixed later. Let $s:= |A|$. Define $\calh = \calbddsets |_A$.
  Consider the unit distance graph $UD(\calh)$. For each set $Q \in \calh$, define the weight of $Q$ as:
  $$ 
  w(Q) = \#\{S \in \calbddsets: S \cap \calh = Q \}.
  $$
  Observe that $ \sum_{Q \in \calh} w(Q) = M.$

  Let $E = E(UD(\calh))$, the edge set of $UD(\calh)$. Now define the weight of an \emph{edge} 
  $e = (Q,Q') \in E$ as 
  $$ w(e) = \min (w(Q),w(Q')).$$
  Let $W := \sum_{e \in E} w(e) .$
  We claim that
  \begin{cl}
     For any $A\subset X$,  
     $$ W \leq 2d\sum_{Q \in \calh} w(Q) = 2dM.$$
  \end{cl}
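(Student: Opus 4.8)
I would deduce this weighted inequality from the (unweighted) bound on the number of edges of the unit distance graph of a set system of primal shatter dimension $d$, via a level-set decomposition of the edge weights. Recall the elementary identity $\min(a,b)=\bigl|\{\,t\ge 1:\ t\le a \text{ and } t\le b\,\}\bigr|$, valid for non-negative integers $a,b$. Applying it to each edge $e=(Q,Q')\in E$ with $a=w(Q)$, $b=w(Q')$ and exchanging the order of summation gives
$$
W \;=\; \sum_{e=(Q,Q')\in E}\min\bigl(w(Q),w(Q')\bigr)
\;=\; \sum_{t\ge 1}\bigl|\{\,e=(Q,Q')\in E:\ w(Q)\ge t \text{ and } w(Q')\ge t\,\}\bigr| .
$$
Put $\calh_{\ge t}:=\{Q\in\calh:\ w(Q)\ge t\}$. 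Since adjacency in a unit distance graph is a property of the two sets alone, the $t$-th summand above is exactly the number of edges of $UD(\calh_{\ge t})$, so $W=\sum_{t\ge1}\bigl|E(UD(\calh_{\ge t}))\bigr|$.

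\textbf{Bounding each layer and summing back.} Each $\calh_{\ge t}$ is a subfamily of $\calh=\calbddsets|_A$, and $\calbddsets\subseteq\calsets$; since the primal shatter function only shrinks under passing to a subfamily and under restricting the ground set, $\pi_{\calh_{\ge t}}(m)\le\pi_{\calsets}(m)$, so every $\calh_{\ge t}$ still has primal shatter dimension at most $d$. Hence the unit distance graph bound underlying Haussler's packing lemma (see~\cite{Haussler92spherepacking,Wernisch-packing-92,matousek-geomdisc-2009}) applies to each of them and yields $\bigl|E(UD(\calh_{\ge t}))\bigr|\le 2d\,|\calh_{\ge t}|$. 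Summing over $t$ and undoing the level-set decomposition on the right-hand side,
$$
W \;\le\; 2d\sum_{t\ge1}|\calh_{\ge t}|
\;=\; 2d\sum_{Q\in\calh}\bigl|\{\,t\ge1:\ w(Q)\ge t\,\}\bigr|
\;=\; 2d\sum_{Q\in\calh}w(Q) \;=\; 2dM ,
$$
where the last equality is the identity $\sum_{Q\in\calh}w(Q)=M$ recorded above.

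\textbf{Where the difficulty lies.} The only delicate point is that the unit distance edge bound must be invoked for the individual level sets $\calh_{\ge t}$ rather than for $\calsets$ or $\calh$ as a whole; this is legitimate precisely because the hypothesis of bounded primal shatter dimension passes to every subfamily and to restrictions of the ground set, which is immediate from the definition. Everything else is bookkeeping. (Alternatively, one could prove the weighted statement directly by carrying the weights through the same inductive argument that establishes the unweighted edge bound; the level-set reduction above is simply a black-box way to reuse the already-cited result, and it actually gives the slightly stronger $W\le dM$ whenever the edge bound is available with constant $d$.)
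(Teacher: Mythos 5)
Your proof is correct, but it follows a genuinely different route from the one in the paper. The paper proves the claim by a greedy vertex-deletion argument: from the edge bound $|E(UD(\calh))|\le d|V(\calh)|$, it deduces the existence of a vertex $v$ of degree at most $2d$; removing $v$ decreases $\sum_Q w(Q)$ by $w(v)$ and decreases $W$ by at most $2d\,w(v)$, since every edge incident to $v$ has weight $\min(w(Q),w(Q'))\le w(v)$; iterating until the graph is empty gives $W\le 2d\sum_Q w(Q)=2dM$. You instead use a level-set (layer-cake) decomposition of the weights, writing $W=\sum_{t\ge1}|E(UD(\calh_{\ge t}))|$ and applying the same edge bound to each level set $\calh_{\ge t}$, using the observation that the unit-distance graph of a subfamily is the induced subgraph and that primal shatter dimension is monotone under passing to subfamilies. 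Both arguments rest on exactly the same lemma, so neither is more elementary; the trade-off is essentially stylistic. One concrete advantage of your version, which you correctly note, is that applying the lemma directly gives $W\le d\sum_Q w(Q)=dM$, saving the factor of $2$ that the paper loses in passing from the edge count to a minimum-degree vertex. The paper's version is arguably closer in spirit to Chazelle's original presentation of Haussler's argument, which proceeds by peeling low-degree vertices.
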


  \begin{proof}
The proof is based on the following lemma, proved by Haussler~\cite{Haussler92spherepacking} for 
set systems with bounded VC dimension, and later verified by Wernisch~\cite{Wernisch-packing-92} to also work for set
systems with bounded primal shatter dimension.
The following version appears in \Matousek's book 
on Geometric Discrepancy~\cite{matousek-geomdisc-2009}:

\begin{lemma}[\cite{Haussler92spherepacking}, \cite{matousek-geomdisc-2009}]
   Let $\mathcal{S}$ be a set-system of primal shatter dimension $d$ on a 
finite set $X$. Then the unit-distance graph $UD(\mathcal{S})$ has at most 
$d|V(\mathcal{S})|$ edges.
\end{lemma}

  Let $\mathcal{S}$ be $\calh$ . Since $\calh$ has primal shatter dimension $d$, the 
  lemma implies that there exists a vertex $v \in V(\calh)$, whose degree is at most 
  $2d$. Removing $v$, the total vertex weight drops by $w(v)$, and the total edge weight 
  drops by at most $2dw(v)$. Continuing the argument until all vertices are removed, 
  we get the claim.
  \end{proof}

  Next, we shall prove a lower bound on the expectation $E[W]$.
  Choose a random element $a \in A$. Let $A' := A \setminus \{a\}$. Note that $A'$ 
  is a random subset of $X$, chosen with probability $p' = p-1/n$. 
  Crucially, one 
  can consider the above process equivalent to first choosing $A'$ by selecting each
  element of $X$ with probability $p'$, and then selecting a uniformly random element 
  $a\in X\setminus A'$ with probability $1/n$.

  Let $E_1 \subset E$ be those edges $(Q,Q')$ of $E$ for which $Q \Delta Q' = \{a\}$, 
  and let $W_1 = \sum_{e \in E_1} w(e) .$
  We need to lower bound $E[W_1]$.
  Given $A' \subset X$, let $Y = Y(A') := \#\{S \in \calbddsets : |S\cap A'|> c(l/\delta)\}$ i.e. 
  the number of sets in $\calbddsets$, each of whose intersection with $A'$ has more than $c(l/\delta)$ elements, 
  (where $c$ shall be chosen appropriately). Let $Nice$ denote the event 
  $(Y \leq 8E[Y]) \wedge( np/2 \leq s \leq 3np/2) = N_Y \wedge N_S$. 
  Conditioning $W$ on $Nice$, we get:
  \begin{eqnarray*}
  E[W] &=&   \prob{Nice}E[W|Nice] + \prob{\overline{Nice}}E[W|\overline{Nice}]\\
  &>& \prob{Nice}E[W|Nice] 
  \end{eqnarray*}

  By Markov's inequality: 
  $$\prob{\bar{N_Y}} = \prob{Y\leq 8E[Y]} \leq 1/8,$$ 
  and using Chernoff bounds, 
  $$\prob{\bar{N_S}} = \prob{|s-np|> np/2} \leq 2e^{-36dKn/(3.2^2\delta)} 
                     << 1/4,$$ 
  since $n/\delta\geq 1$. We get that $\prob{Nice}\geq 7/8-e^{-4dK} \geq 3/4$ 
  for $dK\geq 1$.
  Hence, $$E[W] \geq (3/4)E[W|Nice] \geq \frac{3(np/2)}{4}E[W_1|Nice],$$
  where the last inequality follows by symmetry of the choice of $a$ from $A$, 
  and the lower bound on $s$ when the event $Nice$ holds.

  Hence, $E[W]\geq (3np/8)E[W_1|Nice]$. So to lower bound $E[W_1]$
  up to constants, it suffices just to lower bound $E[W_1|Nice]$.
  Let $W_2$ denote $W_1|Nice$. Consider now $E[W_2|A']$. That is, consider 
  a fixed set $A'$ whose size is between $np/2$ and $3np/2$, and which is 
  such that 
  the number of sets $S \in \calbddsets$ which intersect $A'$ in more 
  than $cl/\delta$ vertices, is at most $8E[Y]$. We shall lower bound 
  $W_1$ for this choice of $A'$.
  
  By definition, $W_1 = \sum_{e \in E_1} w(e)$. Consider the equivalence 
  classes of $\calbddsets$ formed by their intersections with $A'$:
  $$ \calbddsets = \mathcal{P}_1 \cup \mathcal{P}_2 \cup \ldots \cup 
     \mathcal{P}_r .$$
  Define $Bad \subset [r]$ to be those indices $j$ for which 
  $\mathcal{P}_j$ is such that 
  $$\forall S \in \mathcal{P}_j: | S \cap A'| > 8c(l/\delta).$$
  Further, let $Good$ be $[r]\setminus Bad$.
  Since $Nice$ holds, we have:
  $$ \sum_{j \in Bad} | \mathcal{P}_j| \leq 8E[Y].$$
  Consider a class $\mathcal{P}_i$ such that $i \in Good$.
  Let $P_1 \subset \mathcal{P}_i$ be those sets in $\mathcal{P}_i$ which 
  contain $a$, 
  and let $P_2 = \mathcal{P}_i \setminus P_1$. Let $b = |\mathcal{P}_i|$,  
  $b_1 = |P_1 |$ 
  and $b_2 = |P_2 |$.
  Then the edge $e\in E_1$ formed by the projection of 
  $\mathcal{P}_i$ in $A$, has weight 
  $$w(e) = \min (b_1,b_2) \geq \frac{b_1b_2}{b}.$$ 
  For a given ordered pair of sets $S,S' \in \mathcal{P}_i$, the probability 
  that $a \in S \Delta S'$ is 
  $\frac{\delta}{n-|A'|}$, which is at least $\frac{\delta}{n}$. Therefore, 
  the expected weight of $e$ (conditioned on $Nice$ and $A'$) is at least:
  $$E[w(e)|Nice\cap A'] \geq \frac{b(b-1)}{b}.\frac{\delta}{n} = 
   (b-1)\frac{\delta}{n} = (|\mathcal{P}_i|-1)\frac{\delta}{n} .$$
  Hence, the expected weight of $E_1$ is:
  \begin{eqnarray*}
    E[W_2|A'] \geq \sum_{e\in E_1} w(e) \geq \sum_{i \in Good} 
    (|\mathcal{P}_i|-1)\frac{\delta}{n} 
  \end{eqnarray*}
  But by the size-sensitive shattering property, we have that 
  $$\forall j \in Good, \:\: |(\mathcal{P}_j|_{A'})| \leq Cs^{d_1}(clp)^{d_2}.$$
  Substituting in the lower bound for $E[W_2]$, we get:
  \begin{eqnarray*}
  E[W_2|A'] &\geq& \left(\left(\sum_{i \in Good} |\mathcal{P}_i|\right)
           -C(2np)^{d_1}(clp)^{d_2}\right)\frac{\delta}{n} \\
  &\geq& \left(|\mathcal{P}_l| - 8E[Y]
           -C(6dK)^d.2^{d_1}c^{d_2}\left(\frac{n}{\delta}\right)^{d_1}\left(\frac{l}{\delta}\right)^{d_2}\right)\frac{\delta}{n}  \\
  &\geq& \left(M - 8E[Y]
           -C_1K^d\left(\frac{n}{\delta}\right)^{d_1}\left(\frac{l}{\delta}\right)^{d_2}\right)\frac{\delta}{n} 
  \end{eqnarray*}
  where $C_1 = C.(6d)^d2^{d_1}c^{d_2}$.
  Since the above holds for each $A'$ which satisfies $Nice$, we get that 
  $$E[W_2] \geq \left(M - 8E[Y]
           -C_1K^d\left(\frac{n}{\delta}\right)^{d_1}\left(\frac{l}{\delta}\right)^{d_2}\right)\frac{\delta}{n},$$
  Comparing with the upper bound on $W$,
  $$ (3np/8)E[W_1|Nice] \leq  E[W] \leq 2dM ,$$
  and substituting the lower bound $E[W_1|Nice]$, and solving for $M$, we get 
  $$M \leq \frac{(27K/4)\left(8E[Y] + C_1K^d
           \left(\frac{n}{\delta}\right)^{d_1}\left(\frac{l}{\delta}\right)^{d_2}\right)}{(27K/4-1)} .$$
  The following claim therefore, completes the proof:
  \begin{cl} \label{clm:upbddexp}
      For $K= \max\{1,(\ln g)/36\}$,
      $$E[Y] \leq C_2\left(\frac{n}{\delta}\right)^{d_1}
             \left(\frac{l}{\delta}\right)^{d_2}.$$
  \end{cl}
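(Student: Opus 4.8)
The plan is to prove Claim~\ref{clm:upbddexp} by collapsing $E[Y]$ to a single binomial upper tail and then calibrating the sampling rate and the threshold constant $c$ so that this tail beats the overcounting blow-up. \emph{Step 1 (reduction).} Since $\calbddsets$ consists only of sets of size $l$, for every $S\in\calbddsets$ the quantity $|S\cap A'|$ has the same law, $\bin{l}{p'}$ with $p'=p-1/n<p=36dK/\delta$. Hence, writing $q:=\prob{|S\cap A'|>c(l/\delta)}$, we get the exact identity $E[Y]=\sum_{S\in\calbddsets}q=Mq$. Recalling that $g=g(n,l,\delta)$ is defined by $M=(n/\delta)^{d_1}(l/\delta)^{d_2}g^{d}$ and that the choice $K=\max\{1,(\ln g)/36\}$ forces $g^{d}\le e^{36dK}$, it suffices to show $q\le e^{-36dK}$: then $E[Y]=Mq\le (n/\delta)^{d_1}(l/\delta)^{d_2}g^{d}e^{-36dK}\le (n/\delta)^{d_1}(l/\delta)^{d_2}$, i.e.\ $C_2=1$ works. (We may assume $l\ge\delta/2$; when $l<\delta/2$ any two distinct $l$-sets have symmetric difference $\le 2l\le\delta$, so $M\le 1$ and the claim reduces to a few bounded parameter ranges absorbed into $c^{*}$.)

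\emph{Step 2 (the tail).} The mean of $\bin{l}{p'}$ is $lp'<36dK(l/\delta)$. Take the threshold constant $c$ proportional to $dK$, say $c=36e^{2}dK$, so that $c(l/\delta)\ge e^{2}\cdot lp'$. The standard Chernoff/Poisson upper-tail bound $\prob{\bin{m}{r}\ge a}\le (e\,mr/a)^{a}$, valid for $a\ge mr$, then gives
$$q\;\le\;\left(\frac{e\cdot 36dK(l/\delta)}{36e^{2}dK(l/\delta)}\right)^{36e^{2}dK(l/\delta)}\;=\;e^{-36e^{2}dK(l/\delta)}.$$
Using $l/\delta\ge 1/2$ and $18e^{2}>36$, this is $\le e^{-18e^{2}dK}\le e^{-36dK}$, exactly the bound required by Step~1.

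\emph{Step 3 (closing the loop).} Feeding $E[Y]\le (n/\delta)^{d_1}(l/\delta)^{d_2}$ into the displayed upper bound on $M$ just before the claim, and using $27K/4-1\ge(23/27)(27K/4)$ for $K\ge1$ together with $C_1=\Theta(c^{d_2})=\Theta(K^{d_2})$, we obtain $M\le C_3K^{d+d_2}(n/\delta)^{d_1}(l/\delta)^{d_2}$ for an absolute constant $C_3$. Thus $g^{d}\le C_3K^{d+d_2}=C_3(\max\{1,(\ln g)/36\})^{d+d_2}$, i.e.\ $g\le c'(\log g)^{1+d_2/d}$ for a new absolute constant $c'$. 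Starting from Haussler's bound $g^{d}=O((n/l)^{d_2})$ and iterating this contraction drives $g$ below the fixed point $c^{*}$ of $x\mapsto c'(\log x)^{1+d_2/d}$ (equivalently, of $f(x)=c'\log x$ after absorbing powers), which is an absolute constant; this yields $M=O((n/\delta)^{d_1}(l/\delta)^{d_2})$ and hence Theorem~\ref{thm-main-size-sens-bd}.

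\emph{Main obstacle.} The crux is the calibration of constants. Sampling at rate $p=36dK/\delta$ inflates the expected intersection size by a factor $\Theta(dK)$, and the price of this---via the size-sensitive shattering bound applied to $A'$---is an overcounting factor as large as $g^{d}\le e^{36dK}$; the argument survives only because the bad-set tail $q$ can simultaneously be pushed down to $e^{-\Theta(dK)}$ with a \emph{strictly larger} constant in the exponent, so that $g^{d}q\le 1$. Pinning down the constant $36$ (and the proportionality of $c$ to $dK$) so that the tail wins is the delicate part; everything else---that $p\le1$, the degenerate range $l<\delta$, and the fact that $c$ grows with $K$ (which costs only an extra $\mathrm{poly}\log$ factor, harmlessly absorbed by the fixed-point iteration)---is routine.
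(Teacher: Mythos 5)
Your proof is correct and follows essentially the same route as the paper: apply a Chernoff-type upper-tail bound to $|S\cap A'|$ for a single $S\in\calbddsets$ (all such $S$ having size $l$), multiply by $M$ to get $E[Y]=Mq$, and use $K\ge(\ln g)/36$ so that $g^{d}e^{-36dK}\le 1$. Your explicit choice $c=36e^{2}dK$, which places the threshold $c(l/\delta)$ a constant factor above the mean $lp'\approx 36dK\,l/\delta$, is a welcome tightening of the paper's somewhat loose bookkeeping (the paper defines $Y$ with threshold $c(l/\delta)$ for a fixed constant $c=1.01e$ but actually bounds the tail at $clp'=\Theta(dK\,l/\delta)$); the resulting extra $K^{d_2}$ in $C_1$ only changes the fixed-point equation from $g\le c'\log g$ to $g\le c'(\log g)^{1+d_2/d}$, which still yields an absolute-constant fixed point, so the conclusion is unaffected.
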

  Indeed, substituting the choice of $K$ and the value of $E[Y]$ from 
  Claim \ref{clm:upbddexp}, we get that 
 \begin{eqnarray*}
 g^d\left(\frac{n}{\delta}\right)^{d_1}\left(\frac{l}{\delta}\right)^{d_2}  = M
   &\leq& \frac{C_1K^d\left(\frac{n}{\delta}\right)^{d_1}
    \left(\frac{l}{\delta}\right)^{d_2}
     +  8C_2\left(\frac{n}{\delta}\right)^{d_1}
            \left(\frac{l}{\delta}\right)^{d_2}}{1-4/27K} \\
   &\leq& C_3K^d\left(\frac{n}{\delta}\right)^{d_1}\left(\frac{l}{\delta}\right)^{d_2} \\
   &\leq& C_4(\max\{1,\log g\})^d\left(\frac{n}{\delta}\right)^{d_1}\left(\frac{l}{\delta}\right)^{d_2} 
 \end{eqnarray*}
 This implies that $g \leq C_4\max\{1,\log g\}$. Since for any non-negative 
 $g$, we have $g \geq C_4\log g$, therefore, it suffices to take $g\leq C_4$,
 i.e. $g=c^*$, where $c^*$ is a constant independent of $n,l,\delta$.
      
It only remains to prove Claim \ref{clm:upbddexp}:
\begin{proof}[Proof of Claim \ref{clm:upbddexp}]
  The proof follows easily from Chernoff bounds. Fix $S \in \calbddsets$. Let $Z=|A' \cap S|$. 
  Then $E[Z]=|S|p'=lp'$. Since $A'$ is a random set chosen with probability $p' = p -1/n$, the probability that 
  $Z\geq clp' = 36cdKl/\delta-cl/n$ is upper bounded using Chernoff bounds, as:
  $$\prob{Z-E[Z]>(c-1)E[Z] } \leq e^{-E[Z]} \leq e^{-36dKl/\delta},$$
  for $c= 1.01e$ and $n\geq 100$, say. Hence the expected number $E[Y]$ of sets, each of which intersect $A'$ in more 
  than $36cdKl/\delta$ elements, is at most:
  $$ E[Y] \leq M.e^{-36dKl/\delta} \leq Me^{-36dK},$$
  since $l\geq \delta$. Substituting the value of $M$ and also $K$ in terms of $f$, we have 
  \begin{eqnarray*}
  E[Y] \leq g^d\left(\frac{n}{\delta}\right)^{d_1}\left(\frac{l}{\delta}\right)^{d_2}e^{-36dK} 
     \leq \left(\frac{n}{\delta}\right)^{d_1}\left(\frac{l}{\delta}\right)^{d_2}e^{d(\ln g-36K)}
     \leq \left(\frac{n}{\delta}\right)^{d_1}\left(\frac{l}{\delta}\right)^{d_2} 
  \end{eqnarray*}
  for $K\geq (\ln g)/36$.
\end{proof}

This completes the proof of Theorem~\ref{thm-main-size-sens-bd}.

\section{Size-sensitive discrepancy bounds}

   Now we shall use the framework of Ezra \cite{Ezra-sizesendisc-soda-14}, together with the result proved in the previous section, to obtain improved 
bounds on the discrepancy of set systems having bounded primal shatter dimension and size-sensitive shattering constants $d_1$ 
and $d_2$. Such set systems are often encountered in geometric settings, e.g. points and half-spaces in $d$ dimensions. In addition, we shall also 
derive an easy corollary for the Beck-Fiala setting, i.e. where each element of the universe $X$ has bounded degree.
In order to keep the exposition as self-contained as possible, we briefly describe Ezra's framework first.

The basic idea, as in \cite{matousek-geomdisc-2009}, 
and \cite{Ezra-sizesendisc-soda-14}, will be to consider the set system formed by 
$A_j^i$ and $B_j^i$ for each $i,j \in \{1,\ldots,k\}$, and bound the discrepancy for this system. We shall then sum the 
discrepancies of the components of each set in the original system, to obtain the total discrepancy. 

For each $i$ and $j$, define $\mathcal{M}_j^i$ as the collection of the sets $A_j^i$, $B_j^i$. In each iteration of the 
Lovett-Meka algorithm, we set a common discrepancy bound 
$\Delta_j^i$ for all the sets in $\mathcal{M}_j^i$. Note that the construction of $A_j^i$ and $B_j^i$ implies that for each 
$i,j$, $|A_j^i|=|B_j^i|=O(n/2^{j-1})$. By Theorem~\ref{thm-main-size-sens-bd}, we have that for each $i$ and $j$, 
$|\mathcal{M}_j^i|= O\left(\frac{2^{jd}}{2^{(i-1)d_2}}\right)$. Further, note that for a fixed $i$, by the construction, 
the size of each original set $S \in \mathcal{S}$ is $O(n/2^{i-1})$.
Grouping the sets having the same discrepancy parameter $\Delta_j^i$, together, we see that we need:
$$ \sum_{i=1}^k \sum_{j=i-1}^k C.\frac{2^{jd}}{2^{d_2(i-1)}} . exp\left(-\frac{(\Delta_j^i)^2}{16s_j}\right) \leq \frac{n}{16},$$
where $k:= \log n$, and $s_j=n/2^{j-1}$.
Define
$$ j_0 := \frac{\log n + d_2(i-1)}{d} - B,$$
where $B$ is a suitable constant to be set later. 
Proceeding as in~\cite{Ezra-sizesendisc-soda-14}, we shall split the sum into two parts: $j> j_0$ and $j\leq j_0$. 
Set
$$ \Delta_j^i := A.\frac{1}{(1+ |j-j_0|)^2}\left(\frac{n^{1/2-1/(2d)}}{2^{(i-1).(d_2/(2d))}}\right).\sqrt{1+2\log h},$$
where $h=(k/2-|i-k/2|)$.

For the case when $j> j_0$, let $j = j_0+r$.
\begin{eqnarray*}
\sum_{i=1}^k \sum_{j>j_0}^k C.\frac{2^{jd}}{2^{d_2(i-1)}} . exp\left(-\frac{(\Delta_j^i)^2}{16s_j}\right) &\leq& 
\sum_{i=1}^{k} \sum_{r=1}^{k-j_0} C.\frac{n2^{rd}}{2^{dB}} . exp\left(-\frac{A^2 2^{r-(B+1)}(1+2\log h)}{16(1+r)^4}\right) \\ 
&\leq& 
\sum_{i=1}^{k} C.\frac{n}{2^{dB}h^2} \sum_{r=1}^{k-j_0} 2^{rd}. exp\left(-\frac{A^2 2^{r-(B+1)}}{16(1+r)^4}\right) \\ 
&\leq& 
\sum_{i=1}^{k} C.\frac{n}{2^{dB}h^2} \sum_{r=1}^{k-j_0} exp\left(rd\ln 2 -\frac{A^2 2^{r-(B+1)}}{16(1+r)^4}\right) 
\end{eqnarray*}
The inner summation over $r$ can be easily seen to converge to a constant, since the exponent can be made
negative for suitably large $A$, and almost doubles with increase in $r$.  The summation over $i$, converges to $\Theta(n)$, 
and can be made much smaller than $n/32$, by suitably adjusting the constant $B$.

For the second part, the calculations proceed as below: \\
For $j\leq j_0$, $i=1..k$, just upper bound the exponent by $1$. Now we have:
$$ \sum_{i=1}^k \sum_{j=i-1}^{j_0} C.\frac{2^{jd}}{2^{dB}2^{d_2(i-1)}} .$$
Reverse the order of summation:
$$ \sum_{j=0}^{\log n/d} \sum_{i=j+1}^{i_0} C.\frac{2^{jd}}{2^{dB}2^{d_2(i-1)}} ,$$
where $i_0=1+\frac{\log n-jd}{d_2}.$ This further simplifies to:
\begin{eqnarray*}
\sum_{j=0}^{\log n/d} \sum_{i=j+1}^{i_0} C.\frac{2^{jd}}{2^{dB}2^{d_2(i-1)}} 
   &\leq& 
\sum_{j=0}^{\log n/d} 2C.\frac{2^{jd}}{2^{d_2(j)}2^{dB}} \\
   &=& 
\sum_{j=0}^{\log n/d} 2C.2^{j(d-d_2)-dB} \\
   &\leq& 
2C.2^{\log n(d_1/d)-dB} = n^{d_1/d}/2^{dB}
\end{eqnarray*}
which is much less than $n$, for suitable value of $B$.

To get the discrepancy bound for the original set $S$, which had size in $[n/2^{i-1},n/2^i)$, we need to sum 
up the discrepancies over the $F_j^i$ in the chain corresponding to $S$: $\Delta_S=\sum_{j}\Delta_j^i$. 
Here, the factor $\frac{A}{(1+|j-j_0|)^2}$ in our 
choice of $\Delta_j^i$ ensures that this sum is essentially of the order of the maximal term, which occurs when 
$j=j_0$. Further, when $d_1=1$, 
the $\log n$ rounds of the Lovett-Meka algorithm induce an extra logarithmic factor. In the case $d_1>1$ 
this does not happen, because the factor of $n^{(d_1-1)/(2d)}$ present in the discrepancy bound sets up 
a geometrically decreasing series.

Therefore, in terms of the size of the original set $S$, which was in $[n/2^{i-1},n/2^i)$, we get:
\begin{eqnarray*}
\Delta_S \leq \sum_{j=i-1}^k \Delta_j^i &=& \sum_{i-1}^k A\frac{n^{1/2-1/(2d)}}{2^{(i-1)(d_2/(2d))}}\sqrt{1+2\log(k/2-|i-k/2|)}  \\
                       &=& \left\{ \begin{array}{ll}
                       O\left(|S|^{d_2/(2d)}n^{(d_1-1)/(2d)}\sqrt{\log f}\right), & \mbox{ if } d_1>1 \\
                       O\left(|S|^{d_2/(2d)}\log n\sqrt{\log f}\right), & \mbox{ if } d_1=1 
                       \end{array} \right.
\end{eqnarray*}
where $$f = \left\{\begin{array}{ll}
           1+2\log|S|, & \mbox{ if } |S| \leq n^{1/2} \\
           1+2\log (n/|S|), & \mbox{ if } |S| \geq n^{1/2}
           \end{array}
           \right.
$$

\vspace{10pt}

\noindent
This completes the proof of Theorem~\ref{thm-appln-disc-bds}.


\section{`Well-behaved' Beck-Fiala systems}
\label{sec-well-behaved-Beck-Fiala}

In many geometric settings, the range space is such that for any subset of elements in the universe,
the number of projections of the range space on this subset is linear in the size of the subset. For 
example, points and axis-parallel rectangles in the plane, or points and axis-parallel boxes in three dimensions,
points and half-spaces in two and three dimensions, etc. In general, following Ezra \cite{Ezra-sizesendisc-soda-14}, we 
call a set-system `well-behaved' if it has bounded primal shatter dimension, and $d_1=1$ for this system. In this section, 
we shall prove a general result for such systems, under the Beck-Fiala setting.


\paragraph*{Details}
 
In the Beck-Fiala setting, each element $x \in X$ has degree bounded by $t$, i.e. belongs to at most $t$ 
many ranges or sets. Suppose, in addition, the range space is also planar, i.e. the ranges appear 
as polygons on a plane, then we obtain the following result:

\begin{theo}\label{coro-well-behaved-Beck-Fiala}
   Let $(X,\mathcal{S})$ be a (finite) set system with bounded primal shatter dimension $d$, and size-sensitive constants 
$d_1=1$, and $d_2=d-d_1$. Further, each element belongs to at most $t$ sets. Then the discrepancy of this set system is given by:
$$ 
\mathrm{disc}(\mathcal{S}) = O(t^{1/2-1/2d}\sqrt{\log\log t}\log n ).
$$
Note that this discrepancy bound is constructive, i.e., in expected
polynomial time we can find a coloring that matches the above
discrepancy bound. 
\end{theo}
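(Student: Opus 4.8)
The plan is to feed the bounded-degree hypothesis into the chaining-plus-partial-colouring machinery used in the proof of Theorem~\ref{thm-appln-disc-bds}. Recall that for each size class $\mathcal{S}_i$ (the sets of size between $n/2^i$ and $n/2^{i-1}$) the chaining produces, for $j=i-1,\dots,k=\log n$, component families $\mathcal{M}_j^i$ of sets of size $O(n/2^{j-1})$, and that Theorem~\ref{thm-main-size-sens-bd} gives $|\mathcal{M}_j^i|=O(2^{jd}/2^{(i-1)d_2})$. The new ingredient is a second, degree-based estimate: since every element lies in at most $t$ ranges, a one-line incidence count ($\sum_{S\in\mathcal{S}_i}|S|\le nt$ while every such $S$ has size $\geq n/2^i$) gives $|\mathcal{S}_i|\le t2^i$, and since every member of $\mathcal{F}_j^i$, hence of $\mathcal{M}_j^i$, lies on the truncated closest-neighbour chain of some $S\in\mathcal{S}_i$, the surjection $S\mapsto F_j^i(S)$ yields $|\mathcal{M}_j^i|\le 2t2^i$. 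Hence $|\mathcal{M}_j^i|\le 2\min\{c^*2^{jd}/2^{(i-1)d_2},\,t2^i\}$, the degree bound winning precisely when $j\gtrsim i+\tfrac{1}{d}\log t$, i.e. for the short components of each chain --- and this is what ultimately pins the final bound to $t$ instead of $n$.

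Next I would apply Lemma~\ref{lemma:lovett-meka} to $(X,\bigcup_{i,j}\mathcal{M}_j^i)$ over the usual $O(\log n)$ rounds, giving a common parameter $\Delta_j^i$ to all sets in $\mathcal{M}_j^i$. I would keep the shape $\Delta_j^i=A\,(1+|j-j_i^{\ast}|)^{-2}\,\sigma_i$ (with a $\sqrt{\log h}$ budget factor in $\sigma_i$, $h=k/2-|i-k/2|$) as in the proof of Theorem~\ref{thm-appln-disc-bds}, but recentre the profile at the transition scale $j_i^{\ast}\approx i+\tfrac{1}{d}\log t$ rather than at $\tfrac{1}{d}(\log n+d_2(i-1))$, and take $\sigma_i\asymp\sqrt{s_{j_i^{\ast}}\log(t2^i)}=\Theta(t^{1/2-1/(2d)}\sqrt{1+\log\log t})$ after dividing the budget $n/16$ polylogarithmically among the $(i,j)$-levels and the rounds. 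The feasibility inequality $\sum_{i,j}|\mathcal{M}_j^i|\exp\!\big(-(\Delta_j^i)^2/(16s_j)\big)\le n/16$ (and its per-round analogue on the shrinking universe) is then checked by splitting the $j$-sum at $j_i^{\ast}$: for $j\le j_i^{\ast}$ use the packing bound $c^*2^{jd}/2^{(i-1)d_2}$ and argue convergence as the exponent almost doubles when $j$ decreases, exactly as in Ezra's calculation; for $j>j_i^{\ast}$ use the degree bound $t2^i$, and the tail converges because $s_j=O(n/2^{j-1})$ shrinks geometrically; the outer sums over $i$ and over the rounds are absorbed into $n/16$ by the $(1+|j-j_i^{\ast}|)^{-2}$ and $\sqrt{\log h}$ factors as before.

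The discrepancy of $S$ with $|S|\in[n/2^i,n/2^{i-1})$ is then $\Delta_S=\sum_{j=i-1}^k\Delta_j^i$, and the quadratic weighting makes this of the order of its peak $\Delta_{j_i^{\ast}}^i=O(t^{1/2-1/(2d)}\sqrt{\log\log t})$: at the transition scale $s_{j_i^{\ast}}=O((n/2^i)t^{-1/d})$ and $\log(t2^i)=O(\log\log t)$ once the polylogarithmic budget is divided out, which is precisely where the degree bound trades the $n$- and $|S|$-dependence of Theorem~\ref{thm-appln-disc-bds} for $t$-dependence. The remaining $\log n$ factor comes from the $O(\log n)$ rounds of Lemma~\ref{lemma:lovett-meka}, since with $d_1=1$ (the ``well-behaved'' regime, which is all the planarity hypothesis is needed for) the per-round contributions do not form a geometric series; and constructivity is immediate because each round of Lemma~\ref{lemma:lovett-meka} is a randomised polynomial-time algorithm.

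I expect the main obstacle to be the feasibility verification of the second step, carried out \emph{uniformly across all size regimes and all $O(\log n)$ rounds}. Two things need care: the transition scale $j_i^{\ast}$ must be shown to lie in $[i-1,k]$ and the two estimates on $|\mathcal{M}_j^i|$ glued with the right constants; and --- the delicate point --- one must check that the long components ($j<j_i^{\ast}$), which can have size $\Theta(|S|)$, contribute only $O(\Delta_{j_i^{\ast}}^i)$ to $\Delta_S$ rather than a term growing with $|S|$. This is exactly where the recentring, the quadratic weighting, and the bounded-degree hypothesis must cooperate, and closing this without losing an extra logarithmic factor is the crux.
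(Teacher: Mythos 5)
Your degree-based count is correct ($|\mathcal{S}_i|\le t2^i$ and hence $|\mathcal{M}_j^i|\le 2t2^i$), and the transition scale $j_i^*\approx i+\tfrac{1}{d}\log t$ is the right place to compare the two estimates. But the plan does not close, and the problem is exactly the one you flag at the end as ``the crux.'' The claimed identity $\sigma_i\asymp\sqrt{s_{j_i^*}\log(t2^i)}=\Theta\bigl(t^{1/2-1/(2d)}\sqrt{1+\log\log t}\bigr)$ is false in general: with $s_j=n/2^{j-1}$ and $j_i^*\approx i+\tfrac{1}{d}\log t$ one has $s_{j_i^*}\approx (n/2^i)\,t^{-1/d}\approx|S|/t^{1/d}$, so the peak is of order $\sqrt{|S|/t^{1/d}}$ up to logarithms. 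This equals $t^{1/2-1/(2d)}$ only when $|S|=\Theta(t)$; for $|S|\gg t$ (e.g.\ $i=1$, $|S|\approx n$) the peak is $\approx\sqrt{n/t^{1/d}}\gg t^{1/2-1/(2d)}$. The re-centering plus quadratic weighting controls the \emph{sum} over $j$ relative to the peak, but does nothing to keep the peak itself from growing with $|S|$. As written, the argument gives the stated bound only for sets of size $O(t)$, and you never say what to do with the rest.

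The paper's proof resolves this with a much blunter use of the degree hypothesis. Since $\sum_{S\in\mathcal{S}}|S|\le nt$, there are fewer than $n/32$ sets of size exceeding $32t$. These \emph{original} sets (not their chain components) are put directly into the Lovett--Meka budget with $\Delta_S=0$, consuming at most $n/32$ of the $n/16$ allowance in every round; Theorem~\ref{thm-appln-disc-bds} with $d_1=1$ is then run only over the sets of size at most $32t$ using the remaining $n/32$, where $|S|^{d_2/(2d)}f(|S|,n)\log n=O\bigl(t^{1/2-1/(2d)}\sqrt{\log\log t}\log n\bigr)$. No re-centering and no bound on $|\mathcal{M}_j^i|$ in terms of $t$ are needed. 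If you want to salvage your route, the missing ingredient is essentially this same observation in disguise: for small $i$ the degree bound makes $\sum_j|\mathcal{M}_j^i|$ so small that you can afford $\Delta_j^i=0$ across the whole chain, giving those large sets negligible discrepancy; but you do not invoke this, and without it the feasibility calculation for the large-$|S|$ regime is incomplete.
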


\paragraph{Proof Sketch:} The proof follows quite simply. First, we observe 
that in a Beck-Fiala-type system with maximum degree $t$,
the number of sets having size more than $32t$ is less than $n/32$. 
If we ensure that $\sum_{S:|S|\leq 32t} exp(-\Delta_S^2/16|S|)$ 
is at most $n/32$, then each of the remaining sets $S$ can be assigned zero discrepancy (i.e. $\forall S: |S|\geq 32t:\Delta_S=0$)
throughout the $O(\log n)$ rounds of the algorithm.
We apply Theorem~\ref{thm-appln-disc-bds} 
for $d_1=1$ only for the sets whose size is at most $32t$, and set 
$\Delta_S=0$ for each set that has more than $32t$ vertices. Thus we get 
that the maximum discrepancy is 
$O(t^{1/2-1/2d}\sqrt{\log \log t}\log n)$.

\vspace{5pt}
\noindent {\bf Example.} Observe that
Theorem~\ref{coro-well-behaved-Beck-Fiala} implies that for the case of
points and halfspaces in the $2$-dimensional case, in the Beck-Fiala
setting, the discrepancy is bounded by 
$$
O(t^{1/4} \sqrt{\log\log t}\log n).
$$

\section{Improving relative $(\varepsilon,\delta)$-approximation bound via discrepancy}
\label{sec-improving-sample-bound-via-discrepancy}

Har-Peled and
Sharir~\cite{Har-PeledS11-relative-approximation-geometry} 
showed that the notion of relative $(\varepsilon, \delta)$-approximation
and $(\nu, \alpha)$-sample are equivalent if $\nu$ is proportional $\varepsilon$ and $\alpha$
is proportional to $\delta$. 
In this section we will be working with $(\nu, \alpha)$-sample, and
the improvements in the bounds we get in $(\nu,\alpha)$-sample size
will directly imply improvement in the size of relative 
$(\varepsilon, \delta)$-approximation.


We will use the {\em halfing technique}~\cite{MatousekWW93-discrepancy-approx-VC,
Har-PeledS11-relative-approximation-geometry,Ezra-sizesendisc-soda-14} 
repeatedly to get a $(\nu,\alpha)$-sample. In the analysis of this
procedure the size sensitve discrepancy bound for the set system $(X, \mathcal{S})$ plays
an important role. The reason we could improve on the previous bounds
of $(\nu, \alpha)$-sample because we improved on the size sensitive 
discrepancy bounds in Theorem~\ref{thm-appln-disc-bds}.  

In this construction the set $X$ is repeatdly halved in each
iterations 
until one obtains a $(\nu,\alpha)$-sample of appropriate size.
W.l.o.g we will assume for the rest of this section that $X \in \mathcal{S}$.
Let $X_{0} = X$. In each iteration $i \geq 1$, the set $X_{i-1}$ is
partitioned into sets $X_{i-1}$ and $X'_{i-1}$ where $X_{i-1}$ is colored
$+1$ and $X'_{i-1}$ is colored $-1$. Note that the coloring corresponds
to the discrepancy bounds obtained in Theorem~\ref{thm-appln-disc-bds}.
Assume, w.l.o.g., $|X_{i-1}| \geq |{X_{i-1}}'|$.
We continue this process until a $(\nu, \alpha)$-sample of desired
size is obtained.

\paragraph*{Case of $d_{1} > 1$}

Since we assume $X_{0} = X \in \mathcal{S}$, we would get for each
iteration $i$, $X_{i-1} \in \mathcal{S}\mid_{X_{i-1}}$.
From Theorem~\ref{thm-appln-disc-bds}, we get, for all $S \in \mathcal{S}$, that
\begin{eqnarray}
  | |X_{i} \cap S| - |X_{i}'\cap S| | \leq K_{d}
  |X_{i-1}\cap S|^{\frac{d-d_{1}}{2d}} |X_{i-1}|^{\frac{d_{1}-1}{2d}}
  \, f(|X_{i-1} \cap S|, |X_{i-1}|).
\end{eqnarray}
The constant $K_{d}$ depends only on $d$ and the function 
$f(,)$ is defined in Theorem~\ref{thm-appln-disc-bds}.

Taking $S = X_{i-1}$ and using the fact $|X_{i}| + |X_{i}'| =
|X_{i-1}|$ we get 
\begin{eqnarray}
  | |X_{i}| - (|X_{i-1}| - |X_{i}| ) |
  &\leq & K_{d} |X_{i-1}|^{\frac{d-d_{1}}{2d}}
  |X_{i-1}|^{\frac{d_{1}-1}{2d}} 
  \quad\mbox{as $f(|X_{i-1}|, |X_{i-1}|) =1$}\nonumber \\
  &=& K_{d} |X_{i-1}|^{\frac{d-1}{2d}}
\end{eqnarray}
Therefore
\begin{eqnarray}
  \left| |X_{i}| - \frac{|X_{i-1}|}{2}\right| \leq \frac{K_{d}}{2} |X_{i-1}|^{\frac{d-1}{2d}}. 
\end{eqnarray}
This implies 
\begin{eqnarray*}
  |X_{i}| &\leq& \frac{|X_{i-1}|}{2} \left( 1 + \frac{K_{d} \,
      |X_{i}|^{\frac{d-1}{2d}}}{|X_{i-1}|} \right)\\
  &=& \frac{|X_{i-1}|}{2} \left( 1 + \frac{K_{d}}{|X_{i-1}|^{\frac{d+1}{2d}}} \right)
\end{eqnarray*}

Write $|X_{i}|$ as 
\begin{eqnarray*}
  |X_{i}| = \frac{|X_{i-1}|}{2} (1 + \delta_{i-1})
  \;\;\mbox{where}\;\; 0 \leq \delta_{i-1} \leq
  \frac{K_{d}}{|X_{i-1}|^{\frac{d+1}{2d}}}.
\end{eqnarray*}
Above inductive formula gives the following recursive formula:
\begin{eqnarray}\label{equation-bounding-X-i}
  |X_{i}| &=& \frac{|X_{0}|}{2^{i}} \prod_{j=0}^{i-1}(1+\delta_{j}) \nonumber\\
  &\leq& \frac{|X_{0}|}{2^{i}} \exp\left\{ \sum_{j=0}^{i-1} \delta_{j}
  \right\}\nonumber\\ 
  &\leq&  \frac{n}{2^{i}} \exp\left\{ K_{d} \sum_{j=0}^{i-1} \left( \frac{2^{j}}{n}\right)^{1/2+1/2d}
  \right\} 
\end{eqnarray}
The last inequality follows from the fact $|X_{i}| \geq |X_{i-1}|/2$
(by construction)
and $\delta_{j} \leq \frac{K_{d}}{|X_{i-1}|^{\frac{d+1}{2d}}}$.
The exponential term in Eq.~\eqref{equation-bounding-X-i} to
$O(1)$ if 
$$
i \leq \log n - \frac{2d}{d+1} \log K_{d}.
$$ 
Stopping the procedure for the above mentioned bound we get $n_{i} = \Omega(K_{d}^{\frac{2d}{d+1}})$.


Observe that 
\begin{align}\label{eqn-sample-size-1}
  \left|\overline{X}_{i-1}(S) - \overline{X}_{i}(S) \right| &= \left| \frac{|S\cap
      X_{i-1}|}{|X_{i-1}|} - \frac{|S\cap X_{i}|}{|X_{i}|}\right|&\nonumber\\
  &= \left| \frac{|S\cap X_{i}|+|S\cap X_{i-1}|}{|X_{i-1}|} -
    \frac{|S\cap X_{i}|}{|X_{i}|}\right|& \mbox{as $X_{i-1} = X_{i}
    \sqcup X_{i}'$} \nonumber\\
  &= \left| \frac{|S\cap X_{i}|+|S\cap X_{i-1}|}{|X_{i-1}|} -
    \frac{2|S\cap X_{i}|}{|X_{i-1}| (1+ \delta_{i-1})}\right|& \mbox{as $|X_{i}| =
    \frac{|X_{i-1}|(1+\delta_{i-1})}{2}$}\nonumber\\
  &= \left| \frac{|S\cap X'_{i}|}{|X_{i-1}|} - \frac{|S\cap
      X_{i}|(1-\delta_{i-1})}{|X_{i-1}|(1+\delta_{i-1})}\right|&\nonumber\\
  &=\left| \frac{|S\cap X_{i}'| - |S\cap X_{i}|}{|X_{i-1}|} + \frac{2\delta_{i-1}}{1+\delta_{i-1}}
  \frac{|S\cap X_{i}|}{|X_{i-1}|}\right| & \mbox{add. \&
  sub. $\frac{|S\cap X_{i}|}{|X_{i-1}|}$}\nonumber\\
  &=\left| \frac{|S\cap X_{i}'| - |S\cap X_{i}|}{|X_{i-1}|} + \delta_{i-1}
  \frac{|S\cap X_{i}|}{|X_{i}|}\right| & \mbox{as $|X_{i-1}| =
  \frac{2|X_{i}|}{(1+\delta_{i-1})}$}\nonumber\\
  &=\left| \frac{|S\cap X_{i}'| - |S\cap X_{i}|}{|X_{i-1}|}\right| +
  \delta_{i-1} \overline{X}_{i}(S)&
\end{align}
Note that the second term in the above equation is bounded by 
$$
\delta_{i-1} \overline{X}_{i}(S) \leq \delta_{i-1} \left(
  \overline{X}_{i}(S) + \overline{X}_{i-1}(S) + \nu\right)
$$
There now we would try to bound the first term $\left| \frac{|S\cap
    X_{i}'| - |S\cap X_{i}|}{|X_{i-1}|}\right|$.

From Theorem~\ref{thm-appln-disc-bds}, the 
fact that $f(|X_{i-1}\cap S|, |X_{i-1}|) = O(\log \log |X_{i-1}|)$,
and $x^{t} \leq \frac{(x+y)}{y^{1-t}}$, $\forall x \geq
    0, \, y > 0$, \& $ t \in [0,\, 1]$, 
we get that there exists $K'_{d}$ such that  
\begin{align}\label{eqn-sample-size-2}
  \left| \frac{|S\cap X_{i}'| - |S\cap X_{i}|}{|X_{i-1}|}\right| &\leq
  \frac{K'_{d}
    \overline{X}_{i-1}(S)^{\frac{d_{2}} {2d} } \log \log |X_{i-1}| }{|X_{i-1}|^{\frac{d+1}{2d}}}&\nonumber\\
  &\leq  \frac{K'_{d} \, (\overline{X}_{i-1}(S) +
    \nu) \, \log \log |X_{i-1}|}{|X_{i-1}|^{\frac{d+1}{2d}} \; \nu^{\frac{d+ d_{1}}{2d} }}&
  \nonumber\\
  &\leq  \frac{K'_{d} \log \log |X_{i-1}|}{|X_{i-1}|^{\frac{d+1}{2d}}}
  \frac{\overline{X}_{i}(S)+\overline{X}_{i-1}(S) +
    \nu}{\nu^{\frac{d+d_{1}}{2d}}}& 
\end{align}

From Eq.s~\eqref{eqn-sample-size-1} and \eqref{eqn-sample-size-2}, we get 
\begin{eqnarray*}
  d_{\nu} (\overline{X}_{i-1}(S), \overline{X}_{i}(S)) &\leq&
  \frac{K'_{d}}{|X_{i-1}|^{\frac{d+1}{2d}}} \left( 1+
    \frac{\log \log |X_{i}|}{\nu^{\frac{d+d_{1}}{2d}}}\right)\\
  &\leq& 
  \frac{2K'_{d} \log \log |X_{i-1}| }{|X_{i-1}|^{\frac{d+1}{2d}}
    \nu^{\frac{d+d_{1}}{2d}}} 
\end{eqnarray*}

Using the fact that $d_{\nu}(\cdot, \cdot)$ satisfies triangle
inequality~\cite{Haussler-decision-theoretic-PAC-learning-92,LiLS-sample-complexity-learning-S01}, 
we get 
\begin{eqnarray}
  d_{\nu}(\overline{X}_{0}(S), \overline{X}_{i}(S)) &\leq&
  \sum_{j=1}^{i} d_{\nu}(\overline{X}_{j-1}(S),
  \overline{X}_{j}(S)) \nonumber\\
  &\leq& O\left(\frac{\log \log n_{i} }{\nu^{\frac{d+d_{1}}{2d}}
      n_{i-1}^{\frac{d+1}{2d} }} \right),
\end{eqnarray}
the constant in big-$O$ depends only on $d$.

This implies to get $d_{\nu}(\overline{X}_{0}(S), \overline{X}_{i}(S))
< \alpha$, we need
$$
n_{i-1} = \Omega\left( \frac{\log \log^{\frac{2d}{d+1}}
    \frac{1}{\nu\alpha}}{\nu^{\frac{d+d_{1}}{d+1}} \alpha^{\frac{2d}{d+1}}}\right).
$$

Therefore there exists a $(\nu,\alpha)$-sample of size 
$$
O\left( \frac{\log \log^{\frac{2d}{d+1}}
    \frac{1}{\nu\alpha}}{\nu^{\frac{d+d_{1}}{d+1}}
    \alpha^{\frac{2d}{d+1}}}\right).
$$

\paragraph*{Case of $d_{1} = 1$}

Using the same technique as for the case of $d_{1} > 1$, we will get
the following bound for $(\nu, \alpha)$-sample size 
$$
\max \left\{ O\left(\log^{\frac{2d}{d+1}} n\right), 
  \, O\left( \frac{\log^{\frac{2d}{d+1}} n \, \log
      \log^{\frac{2d}{d+1}} n}{\nu
      \alpha^{\frac{2d}{d+1}}}\right)\right\} .
$$
Note that the constant in big-$O$ depends only on $d$.

\vspace{10pt}

This completes the proof of Theorem~\ref{thm-main-sample-size-bound}.



\bibliography{Packing}

\begin{thebibliography}{10}

\bibitem{Banaszczyk-rsa-98}
W.~Banaszczyk.
\newblock Balancing vectors and gaussian measures of n-dimensional convex
  bodies.
\newblock {\em Random Struct. Algorithms}, 12(4):351--360, 1998.

\bibitem{Bansal-discrepancy-minimization-focs-10}
N.~Bansal.
\newblock Constructive {A}lgorithms for {D}iscrepancy {M}inimization.
\newblock In {\em 51th Annual {IEEE} Symposium on Foundations of Computer
  Science, {FOCS} 2010, October 23-26, 2010, Las Vegas, Nevada, {USA}}, pages
  3--10, 2010.

\bibitem{Beck-discrepancy-integer-sequence-81}
J.~Beck.
\newblock {R}oths estimates on the discrepancy of integer sequences is nearly
  sharp.
\newblock {\em Combinatorica}, 1(4):319--325, 1981.

\bibitem{Spencer-six-standard-85}
J.~Beck.
\newblock Six standard deviations suffice.
\newblock {\em Trans. Amer. Math. Soc.}, 289(2):679--706, 1985.

\bibitem{Beck-Fiala-integer-making-1981}
J.~Beck and T.~Fialq.
\newblock ``integer making'' theorems.
\newblock {\em Discrete Applied Math.}, 3:1--8, 1981.

\bibitem{Chazelle-packing-92}
B.~Chazelle.
\newblock A note on {H}aussler's packing lemma.
\newblock Technical report, Princeton, 1992.

\bibitem{Haussler92spherepacking}
D.~David~Haussler.
\newblock Sphere {P}acking {N}umbers for {S}ubsets of the {B}oolean n-{C}ube
  with {B}ounded {V}apnik-{C}hervonenkis {D}imension.
\newblock {\em J. Comb. Theory, Ser. {A}}, 69(2):217--232, 1995.

\bibitem{Ezra-smallsizeapprangespace-socg-13}
E.~Ezra.
\newblock Small-size relative $(p, \varepsilon)$-approximations for
  well-behaved range spaces.
\newblock In {\em Symposium on Computational Geometry}, pages 233--242, 2013.

\bibitem{Ezra-sizesendisc-soda-14}
E~Ezra.
\newblock A {S}ize-{S}ensitive {D}iscrepancy {B}ound for {S}et {S}ystems of
  {B}ounded {P}rimal {S}hatter {D}imension.
\newblock In {\em SODA}, pages 1378--1388, 2014.

\bibitem{Har-PeledS11-relative-approximation-geometry}
S.~Har{-}Peled and M.~Sharir.
\newblock Relative $(\emph{p}, \emph{{\(\epsilon\)}})$-{A}pproximations in
  {G}eometry.
\newblock {\em Discrete {\&} Computational Geometry}, 45(3):462--496, 2011.

\bibitem{Haussler-decision-theoretic-PAC-learning-92}
D.~Haussler.
\newblock Decision {T}heoretic {G}eneralizations of the {PAC} {M}odel for
  {N}eural {N}et and {O}ther {L}earning {A}pplications.
\newblock {\em Inf. Comput.}, 100(1):78--150, 1992.

\bibitem{LiLS-sample-complexity-learning-S01}
Y.~Li, P.~M. Long, and A.~Srinivasan.
\newblock Improved {B}ounds on the {S}ample {C}omplexity of {L}earning.
\newblock {\em J. Comput. Syst. Sci.}, 62(3):516--527, 2001.

\bibitem{Lovett-Meka-discmin-focs-12}
S.~Lovett and R.~Meka.
\newblock Constructive {D}iscrepancy {M}inimization by {W}alking on the
  {E}dges.
\newblock In {\em FOCS}, pages 61--67, 2012.

\bibitem{Matousek-tight-half-spaces-95}
J.~Matousek.
\newblock Tight {U}pper {B}ounds for the {D}iscrepancy of {H}alf-{S}paces.
\newblock {\em Discrete {\&} Computational Geometry}, 13:593--601, 1995.

\bibitem{matousek-geomdisc-2009}
J.~Matousek.
\newblock {\em Geometric {D}iscrepancy: {A}n {I}lustrated {Guide} ({A}lgorithms
  and {C}ombinatorics)}.
\newblock Springer, 1999.

\bibitem{MatousekWW93-discrepancy-approx-VC}
J.~Matousek, E.~Welzl, and L.~Wernisch.
\newblock Discrepancy and approximations for bounded {V}{C}-dimension.
\newblock {\em Combinatorica}, 13(4):455--466, 1993.

\bibitem{SharirZ-range-searching-13}
M.~Sharir and S.~Zaban.
\newblock Output-{S}ensitive {T}ools for {R}ange {S}earching in {H}igher
  {D}imensions.
\newblock {\em CoRR}, abs/1312.6305, 2013.

\bibitem{Srinivasan-soda-Matrices97}
A.~Srinivasan.
\newblock Improving the {D}iscrepancy {B}ound for {S}parse {M}atrices: {B}etter
  {A}pproximations for {S}parse {L}attice {A}pproximation {P}roblems.
\newblock In {\em Proceedings of the Eighth Annual {ACM-SIAM} Symposium on
  Discrete Algorithms, 5-7 January 1997, New Orleans, Louisiana.}, pages
  692--701, 1997.

\bibitem{Wernisch-packing-92}
L.~Wernisch.
\newblock Manuscript.
\newblock Technical report, FU Berlin, 1992.

\end{thebibliography}

\end{document}